\newtheorem{theorem}{Theorem}
\newtheorem{condition}{Condition}
\newtheorem{corollary}{Corollary}
\newtheorem{definition}{Definition}
\newtheorem{proposition}{Proposition}
\newtheorem{lemma}{Lemma}
\newtheorem{remark}{Remark}
\newtheorem{claim}{Claim}
\newtheorem{assumption}{Assumption}
\newcommand{\cL}{\mathcal{L}}
\newcommand{\cS}{\mathcal{S}}
\newcommand{\bs}{{\bf{s}}}
\newcommand{\bdelta}{{\boldsymbol{\delta}}}
\newcommand{\ignore}[1]{{}}
\newcounter{parentalgorithm}
\begin{document}
%
% paper title
% Titles are generally capitalized except for words such as a, an, and, as,
% at, but, by, for, in, nor, of, on, or, the, to and up, which are usually
% not capitalized unless they are the first or last word of the title.
% Linebreaks \\ can be used within to get better formatting as desired.
% Do not put math or special symbols in the title.
\title{Time-Restricted Double-Spending Attack on PoW-based Blockchains}
%
%
% author names and IEEE memberships
% note positions of commas and nonbreaking spaces ( ~ ) LaTeX will not break
% a structure at a ~ so this keeps an author's name from being broken across
% two lines.
% use \thanks{} to gain access to the first footnote area
% a separate \thanks must be used for each paragraph as LaTeX2e's \thanks
% was not built to handle multiple paragraphs
%

\author{Yiming Jiang  and Jiangfan Zhang, ~\IEEEmembership{Member,~IEEE}
\thanks{Y. Jiang and J. Zhang are with the Department of Electrical and Computer Engineering, Missouri University of Science and Technology, Rolla MO 65409 USA (e-mail: yjk7z@mst.edu, jiangfanzhang@mst.edu)} 
}

% use for special paper notices
%\IEEEspecialpapernotice{(Invited Paper)}

% make the title area

\maketitle
%\thispagestyle{fancy}
%\pagestyle{fancy}
%
%\fancyhead[L]{\textbf{Supplementary File A}}
% As a general rule, do not put math, special symbols or citations
% in the abstract or keywords.
\begin{abstract}
%A time-restricted DSA (TR-DSA) model on Proof-of-Work based blockchains is considered in this paper. In this TR-DSA model, an attacker only mines its branch within a finite timeframe, and the TR-DSA is considered unsuccessful if the attacker’s branch fails to surpass the honest miners’ branch when the honest miners’ branch has grown by a specific number of blocks. First, we developed a general closed-form expression for the success probability of a TR-DSA. In addition, we provide rigorous proof that the success probability of a TR-DSA is no greater than that of a time-unrestricted DSA where the attacker indefinitely mines its branch. This result implies that blockchain applications with timely tasks are less vulnerable to DSAs than blockchain applications that provide attackers with an unlimited timeframe for their attacks. Furthermore, we show that the success probability of a TR-DSA is always smaller than one even though the attacker controls more than half of the hash rate in the network. This result alerts attackers that there is still a risk of failure in launching a TR-DSA even if they amass the majority of the hash rate in the network.
Numerous blockchain applications are designed with tasks that naturally have finite durations, and hence, a double-spending attack (DSA) on such blockchain applications leans towards being conducted within a finite timeframe, specifically before the completion of their tasks. Furthermore, existing research suggests that practical attackers typically favor executing a DSA within a finite timeframe due to their limited computational resources. These observations serve as the impetus for this paper to investigate a time-restricted DSA (TR-DSA) model on Proof-of-Work based blockchains. In this TR-DSA model, an attacker only mines its branch within a finite timeframe, and the TR-DSA is considered unsuccessful if the attacker’s branch fails to surpass the honest miners’ branch when the honest miners’ branch has grown by a specific number of blocks. First, we developed a general closed-form expression for the success probability of a TR-DSA. This developed probability not only can assist in evaluating the risk of a DSA on blockchain applications with timely tasks, but also can enable practical attackers with limited computational resources to assess the feasibility and expected reward of launching a TR-DSA. In addition, we provide rigorous proof that the success probability of a TR-DSA is no greater than that of a time-unrestricted DSA where the attacker indefinitely mines its branch. This result implies that blockchain applications with timely tasks are less vulnerable to DSAs than blockchain applications that provide attackers with an unlimited timeframe for their attacks. Furthermore, we show that the success probability of a TR-DSA is always smaller than one even though the attacker controls more than half of the hash rate in the network. This result alerts attackers that there is still a risk of failure in launching a TR-DSA even if they amass a majority of the hash rate in the network.

\end{abstract}

% Note that keywords are not normally used for peerreview papers.
\begin{IEEEkeywords}
Blockchain, double-spending attack,  proof-of-work, random walk, success probability of a time-restricted double-spending attack.
\end{IEEEkeywords}

\IEEEpeerreviewmaketitle

\section{Introduction}
As the market price of cryptocurrencies has surged in recent years, the underlying technology known as blockchain has attracted a growing interest. Blockchain, a chronologically ordered sequence of blocks that are cryptographically linked to each other, revolutionizes the way information is secured, distributed, and shared. While blockchain was initially recognized primarily for its role in financial applications such as cryptocurrencies \cite{nakamoto2008bitcoin, wood2014ethereum,hopwood2016zcash}, its inherent security features and decentralized nature have broadened its utility across diverse engineering fields  \cite{kurt2019secure,yang2018blockchain,sharma2018distarch,jiang2023distributed,li2022blockchain,su2022lvbs,weng2021deepchain}. %{\cbe At its core, regardless of the application, blockchain serves as a distributed database, safeguarding the stored data. Integral to its design, a consensus protocol is employed, ensuring that all participants in the network retain identical data copies. 
 %For instance, in Bitcoin \cite{nakamoto2008bitcoin}, data are money transfer transactions, while in engineering applications like the voting system \cite{li2022blockchain}, data are ballots. Owing to Bitcoin's widespread influence, the stored data in the blockchain are commonly referred to as transactions. To guarantee that every participant in the network possesses an identical copy of the stored transactions, a consensus protocol is employed in the blockchain.
%}

Specially, blockchain can serve as a distributed database that safeguards stored data, which employs a consensus protocol to ensure that all participants within the network maintain identical data copies. There are many types of blockchain consensus protocols used in blockchain, such as Proof-of-Work (PoW), Proof-of-Stake, Proof-of-Activity, and Proof-of-Capacity. Among these, the PoW is one of the most widely used consensus protocols. For example, Bitcoin, Litecoin, ZCash, and Bitcoin Cash all adopt PoW as their consensus protocols \cite{ouyang2021pow,hopwood2016zcash,kwon2019bitcoin}. In view of the popularity of the PoW consensus protocol, we focus on PoW-based blockchains in this paper. 

Since the emergence of the first PoW-based blockchain application, Bitcoin, a significant amount of work has been conducted to study the security of the PoW-based blockchains. For instance, some studies \cite{garay2015bitcoin,pass2017analysis,kiffer2018better,ren2019analysis,dembo2020everything} have analyzed essential properties of PoW-based blockchains, such as consistency, ensuring that all honest parties maintain the same blockchain copy during PoW execution. Some other studies focus on investigating security threats to PoW-based blockchains, including double-spending attack (DSA)\cite{nakamoto2008bitcoin,rosenfeld2014analysis,zhang2019double,grunspan2018double,ozisik2017explanation}, selfish mining\cite{eyal2018majority,bai2019deep}, pool-hopping attack \cite{shi2020fee}, routing attack\cite{apostolaki2017hijacking}, eclipse attack\cite{heilman2015eclipse,yves2018total}, and sybil attack\cite{zhang2019double}. Among these attacks, DSA is one of the most destructive threats to blockchain systems, and it has already resulted in significant losses across numerous PoW-based blockchain applications. For example, one of Bitcoin forks, Bitcoin Gold, suffered double-spending attacks in 2018, and again in 2020, with more than 17 million U.S. dollars lost in total \cite{bitcoingold}. Zencash fell victim to a DSA in 2018, which caused losses exceeding \$460,000 based on the cryptocurrency price at that time \cite{zencash}. In 2020, an attacker successfully double-spent 238,306 ethereum classic tokens, valued at \$1.68 million \cite{ethereumclassic}. Verge encountered a significant block reorganization involving more than 560,000 blocks due to a DSA in 2021, which led to the erasure of 200 days of transactions \cite{verge}.

A PoW-based blockchain determines its main chain based on the concept of the longest chain rule \cite{nakamoto2008bitcoin,bashir2017mastering,antonopoulos2014mastering}. In other words, the main chain of a PoW-based blockchain is the longest branch within it. The DSA on a PoW-based blockchain manifests as an attempt to secretly mine a branch within the blockchain with the objective of making its own branch the longest. If a DSA launched by an attacker is successful, i.e., the attacker mines its branch to become the longest branch within a blockchain, the attacker can alter data stored in the blocks of the main chain due to the PoW consensus protocol \cite{nakamoto2008bitcoin,bashir2017mastering,zaghloul2020bitcoin}. In many previous works \cite{yang2021priscore,hjalmarsson2018blockchain,lei2017blockchain,ling2019blockchain,li2018creditcoin,zhou2018beekeeper,
yang2021privacy,asefi2021application}, data stored in a blockchain are commonly referred to as transactions even for non-financial blockchain applications. In this paper, we adopt the same terminology and use the term ``transactions'' to refer to data stored in a blockchain.  We refer to miners that follow the standard PoW consensus protocol as honest miners which mine transactions into blocks to form an honest branch. To mitigate DSAs (i.e., to mitigate the risk of a transaction being altered by an attacker), a transaction is only considered confirmed if it has been included in a block of the longest branch and a designated number of subsequent blocks have been linked to it (typically six blocks in the case of Bitcoin). However, this kind of confirmation protocol cannot completely prevent DSAs against PoW-based blockchains. To be specific, a successful DSA can be achieved by an attacker which secretly mines a branch, containing a transaction that conflicts with a target transaction, to catch up with and surpass the honest branch after the target transaction has been confirmed in the honest branch. This kind of DSA issue has motivated a lot of research in prior studies, see \cite{ozisik2017explanation,zaghloul2020bitcoin, rosenfeld2014analysis,grunspan2018double,zhang2019double,grunspan2022profitability} for instance.  In these prior studies, when mathematically characterizing a DSA against a PoW-based blockchain, an infinite Gambler's Ruin model is commonly adopted to describe the catch-up process of the attacker's branch. Specifically, this infinite Gambler's Ruin model assumes that the attacker can indefinitely mine its own branch to catch up with and surpass the honest branch \cite{ozisik2017explanation,zaghloul2020bitcoin, rosenfeld2014analysis,grunspan2018double,zhang2019double,grunspan2022profitability}. We use the term ``time-unrestricted DSA (TU-DSA)'' to refer to the DSA where the attacker can indefinitely mine its own branch to catch up with and surpass the honest branch. However, this TU-DSA model is not appropriate for many blockchain applications. 

For example, the infinite Gambler's Ruin model is not suitable for modeling DSAs on a certain type of blockchain application. This is because these blockchain applications require the completion of their tasks within finite timeframes by utilizing historical and present data stored in their blockchains, see \cite{su2022lvbs,yang2021priscore,hjalmarsson2018blockchain,hassija2020traffic,lei2017blockchain, ling2019blockchain,li2018creditcoin,zhou2018beekeeper,yang2021privacy,li2022blockchain,asefi2021application}  and the references therein. To be specific, the task of this type of blockchain application is executed and accomplished when the longest branch within its blockchain has grown by a specific number of blocks, signifying the accumulation of sufficient data in that branch. It is worth mentioning that once this type of blockchain application's task is accomplished, any future data stored in the blockchain or any future falsification of the historical and present data stored in the blockchain will not affect the accomplished task. Consequently, any DSA on this type of blockchain application should be regarded as unsuccessful if the attacker fails to mine its branch to surpass the honest branch before the application task is completed. To this end, for a DSA on this type of blockchain application, it is more appropriate to assume that the attacker only mines its branch within a finite timeframe, specifically before the completion of the application task. In other words, if the attacker's branch remains shorter than the honest branch, the attacker only mines its branch until the honest branch has grown by a specific number of blocks. 

Furthermore, the assumption that an attacker can indefinitely mine its branch to catch up with and surpass the honest branch is impractical. This is due to the fact that block mining in PoW-based blockchains requires significant computational power, which in turn consumes a substantial amount of electricity. If an attacker persists in mining its branch indefinitely, the attacker may have to expend a vast amount of electricity and resources, particularly when it takes an extensive period for the attacker to extend its branch to surpass the honest branch. In light of this, from a practical standpoint, an attacker should refrain from indefinitely investing its computational power and resources in performing a single DSA. Instead, it is advisable for an attacker to restrict the timeframe for conducting a DSA, thereby reducing the potential risk of excessive resource consumption on a single DSA. Notably, this viewpoint finds support in recent literature \cite{ozisik2017explanation,zaghloul2020bitcoin,zheng2023adaptive}. 
As the average block time of a PoW-based blockchain is typically constant \cite{zaghloul2020bitcoin,biais2019blockchain,antonopoulos2014mastering,bashir2017mastering}, the timeframe for conducting a DSA can be effectively restricted by leveraging the growth of the honest branch. Specifically, the attacker can restrict the timeframe for conducting a DSA by ensuring that it concludes when the honest branch has grown by a specific number of blocks. As such, if the attacker fails to mine its branch to surpass the honest branch before the honest branch has grown by a specific number of blocks, then the attacker should stop its DSA, and the DSA is considered unsuccessful.

Taking into account the aforementioned considerations, we consider a time-restricted DSA (TR-DSA) on a PoW-based blockchain in this paper. The TR-DSA is regarded as unsuccessful if the attacker's branch fails to surpass the honest branch before the honest branch has grown by a specific number of blocks. This TR-DSA model is well suited to describe DSAs on many blockchain applications particularly those where the application tasks must be accomplished within finite timeframes. Additionally, this TR-DSA model is appropriate for describing DSAs for scenarios where a practical attacker cannot indefinitely invest its computational power and resources in conducting a single DSA.

\subsection{Summary of Results and Main Contributions}

We consider a TR-DSA model in which a TR-DSA is deemed successful if the attacker's branch successfully surpasses the honest branch before the honest branch has grown by a certain number of blocks. We develop a closed-form expression for the success probability of a TR-DSA. To achieve this, we first show that for a TR-DSA, the process of the attacker's branch surpassing the honest branch can be cast as a two-sided boundary hitting problem for a two-dimensional random walk with two possible walking directions. In particular, the two boundaries in this boundary hitting problem are orthogonal to each other, while the two possible walking directions of the two-dimensional random walk are not orthogonal to each other. Moreover, one walking direction of the random walk is neither orthogonal nor parallel to any of the two boundaries. For such a two-sided boundary-hitting problem, a general closed-form expression for the probability of the random walk hitting one boundary before hitting the other is derived, which is then utilized in the development of the closed-form expression for the success probability of a TR-DSA. %The developed probability of a successful TR-DSA offers benefits to both blockchain applications where tasks have to be accomplished within finite timeframes and attackers with limited computational power. For these blockchain applications, this probability aids them in reevaluating their risk of a double-spending attack. Moreover, these blockchain applications can use the developed closed-form expression to optimize the number of blocks needed for transaction confirmation and the required number of blocks increased in the honest branch when the TR-DSA is deemed unsuccessful. These adjustments can effectively reduce the risk of a TR-DSA in such applications.
%While for potential attackers with limited computational power, this probability can be leveraged to assess the expected rewards of launching a TR-DSA based on the attackers' available computational resources. This assessment can help potential attackers decide whether it is worthwhile to launch a double-spending attack.

The developed closed-form expression for the success probability of a TR-DSA is useful, primarily manifested in the following two aspects. On one hand, for blockchain applications where their tasks have to be accomplished within finite timeframes, the developed closed-form expression can serve as a valuable tool for assessing their vulnerability to double-spending attacks. Additionally, these blockchain applications can leverage this closed-form expression to improve their block confirmation protocol, thereby effectively mitigating the risk of a TR-DSA. On the other hand, for potential attackers with limited computational resources, the developed closed-form expression for the success probability of a TR-DSA can be employed to evaluate the expected rewards associated with launching a TR-DSA based on the attacker's available computational resources. This evaluation can aid potential attackers in deciding whether launching a TR-DSA is a worthwhile endeavor.

Furthermore, by leveraging the developed closed-form expression for the success probability of a TR-DSA, we conduct a theoretical comparison between the success probability of a TR-DSA and the success probability of a TU-DSA. To be specific, we prove that the probability of success in launching a TR-DSA  is no greater than that in launching a TU-DSA. In particular, the success probability of a TR-DSA  is strictly smaller than that of a TU-DSA when the probability of next mined block in the blockchain being generated by the attacker is between zero and one. In addition, we show that the success probability of a TR-DSA is strictly smaller than one even though the probability of next mined block in the blockchain being generated by the attacker is between 0.5 and one. These results have two significant implications. On one hand, the developed results demonstrate that the blockchain applications whose tasks have to be accomplished within finite timeframes are inherently less vulnerable to double-spending attacks than the blockchain applications which provide attackers with unlimited timeframes for their attacks. On the other hand, the developed results alert attackers with limited computational resources that there is still a risk of failure in launching a TR-DSA even if they amass a majority of hash rate in the network.

%These results have two significant implications. Firstly, they demonstrate that blockchain applications where the task has a timely requirement are actually less vulnerable to double-spending attacks and more resistant to 51\% attacks\footnote{A 51\% attack means that the attacker gains the ability to alter any transaction stored in the blockchain if the probability of the attacker finding the next block exceeds 50\%.} compared to previous conclusions based on an inappropriate DSA model in these applications (assuming that an attacker can mine his own branch indefinitely). Secondly, these findings alert the attacker with limited computational power that there is still a risk of failure in altering a confirmed transaction even if its computational power raises its probability of finding the next block beyond 50\%. It is noteworthy that this result could also provide inspiration for the design of blockchain applications. Given that the probability of a successful double-spending attack in blockchain applications with timely tasks is lower, it may be advisable to prioritize designing blockchain applications with tasks that must be completed within a finite timeframe to enhance security.

\subsection{Related Work}

Blockchain technology has seen widespread adoption across diverse fields for securing data exchanges and storage in decentralized systems, see \cite{su2022lvbs,yang2021priscore,hjalmarsson2018blockchain,hassija2020traffic,lei2017blockchain, ling2019blockchain,li2018creditcoin,zhou2018beekeeper,yang2021privacy,li2022blockchain,aitzhan2018security,weng2021deepchain,tian2022blockchain,li2023astraea} and the references therein. For example, the authors in \cite{aitzhan2018security} implemented a blockchain-based energy trading system within a smart grid, leveraging blockchain to ensure secure energy trading transactions without the need for trusted third parties. A decentralized secure auditing framework based on blockchain was designed in \cite{li2023astraea} for donation systems, where the blockchain effectively eliminates a single point of failure and provides verifiable and reliable records of donations. It is noteworthy that all these works emphasize that DSAs pose a non-neglected threat to blockchain applications, and addressing them is essential for maintaining the security and trustworthiness of data stored in these blockchain-based systems.

The scenarios where a practical attacker cannot indefinitely invest its computational resources in conducting a DSA have been considered in prior literature \cite{ozisik2017explanation,zaghloul2020bitcoin,jang2020profitable,grunspan2022profitability,zheng2023adaptive}. In \cite{ozisik2017explanation}, the authors pointed out that the assumption made in \cite{nakamoto2008bitcoin}, where an attacker can indefinitely invest its computational resources to mine its branch, has its limitations since an attacker's computational resources are limited in practice. In \cite{zaghloul2020bitcoin}, in order to ensure profitability, the authors suggested limiting the timeframe of a DSA. However, these studies do not theoretically investigate DSAs that are not conducted indefinitely.
To prevent attackers from wasting computational resources on a DSA with a low probability of success,
the authors of \cite{grunspan2022profitability} proposed an $A$-nakamoto DSA strategy, where an attacker terminates its DSA once the attacker's branch lags by $A$ blocks behind the honest branch. This $A$-nakamoto DSA strategy allows an attacker to stop its DSA early for some cases. However, for some other cases, this $A$-nakamoto DSA strategy may still require an attacker to indefinitely invest its computational resources in a single DSA. For example, consider the case where an attacker's branch and the honest branch happen to grow by one block alternately. For this case, the attacker still mines its branch indefinitely according to the $A$-nakamoto DSA strategy. Therefore, the $A$-nakamoto DSA strategy cannot address the practical concerns considered in this paper. 
In \cite{zheng2023adaptive}, the authors proposed an adaptive DSA strategy which allows an attacker to terminate its DSA within a finite timeframe. The motivation for this adaptive DSA strategy is to prevent an attacker from a large loss caused by a failed DSA under unfavorable situations that the success probability of the DSA is very low. However, no analysis on the success probability of the proposed adaptive DSA is conducted in \cite{zheng2023adaptive}. In this paper, we consider a related yet distinct DSA model, TR-DSA, and the primary focus of our work is to theoretically develop the success probability of the TR-DSA. Additionally, in \cite{jang2020profitable}, the authors considered a type of DSA which is only conducted within a finite timeframe as well. To be specific, the timeframe for this type of DSA concludes when the total number of newly mined blocks in both the honest branch and the attacker's branch reaches a designated number. This differs from the TR-DSA model considered in this paper, and therefore, the results in \cite{jang2020profitable} cannot be applied to the TR-DSA considered in this paper. In addition, it is worth mentioning that the DSA model considered in \cite{jang2020profitable} is not suitable for modeling DSAs on blockchain applications where the completion of application tasks depends on the accumulation of a substantial amount of data in the longest branch within the blockchain, specifically when the length of the longest branch reaches a designated number. Notably, the investigation of DSAs on this particular type of blockchain application is one of primary focuses of this paper.

%In \cite{hinz2020resilience}, the author considers a different time-restricted DSA model where the attacker secretly mines its branch long before creating the transaction and the attacker has to stop DSA once there are $T$ blocks increased in the honest branch after attacker starts mining its branch. This timeframe restriction is similar to ours and the success probability of the proposed DSA is developed. Due to the simplifed model of the gain and the loss resulting from the success or failure of the DSA considered in \cite{hinz2020resilience} (i.e., constant $C$ for the gain, constant $c$ for the loss) when calculating the profit of the DSA, the author did not take into account the specific number of blocks mined by the attacker in different cases when obtaining the success probability of the proposed DSA. However, the gain and the loss are actually related to the number of blocks mined by the attacker during this DSA \cite{rosenfeld2014analysis,jang2020profitable,zaghloul2020bitcoin,grunspan2022profitability}. In contrast, we can obtain the specific number of blocks mined by the attacker in different cases  when calculating the success probability of the TR-DSA through our method, and hence our proposed method to calculate the success probability of the TR-DSA can be used to analyze more precise profitability of the TR-DSA.

A number of previous works have studied the success probability of a TU-DSA on a PoW-based blockchain. Nakamoto's seminal work \cite{nakamoto2008bitcoin} initially utilized an infinite Gambler's Ruin model to derive the probability of successfully launching a DSA. Further comprehensive explanations of this probability can be found in \cite{ozisik2017explanation} and \cite{zaghloul2020bitcoin}.  Later, \cite{rosenfeld2014analysis} and \cite{grunspan2018double} developed more accurate expressions for the success probability of a TU-DSA. Additionally, \cite{zhang2019double} calculated the success probability of a TU-DSA while taking network delays into account. All these prior works employ the infinite Gambler's Ruin model to characterize the catch-up process of the attacker's branch where the attacker indefinitely mines its own branch with the goal of surpassing the honest branch. However, this infinite Gambler's Ruin model cannot be employed to characterize the TR-DSA model considered in this paper.

The paper is organized as follows. Section \ref{Section_systemmodel} briefly introduces the time-restricted double-spending attack model. The success probability of a TR-DSA on a PoW-based blockchain is analyzed in Section \ref{Section_trdsa}. The comparison between the success probability of TR-DSA with that of a TU-DSA is conducted in Section \ref{Section_vs}. Numerical simulations are provided in Section \ref{Section_simulation}, and Section \ref{Section_conclusion} provides our conclusions.

\section{Probability of Success in Launching A Time-Restricted Double-Spending Attack}\label{Section_Probability_DSA}
\subsection{Time-Restricted Double-Spending Attack Model}\label{Section_systemmodel}
%In this section, we first introduce the rational double-spending attack model.
%\subsection{Rational Double-Spending Attack Model}\label{Section_Adversarymodel}

In this subsection, we briefly introduce the TR-DSA model considered in this paper. The mechanism of the TR-DSA is similar to that of the TU-DSA. We refer to \cite{ozisik2017explanation} for details of the mechanism of the TU-DSA. The primary distinction between the TR-DSA and the TU-DSA is that an attacker launching a TR-DSA stops mining its branch if its branch fails to surpass the honest branch when the honest branch has grown by a specific number of blocks. Conversely, an attacker launching a TU-DSA does not halt mining its branch under similar circumstances.

We assume that there is a portion of miners, called malicious miners, which
are under the command of an attacker. The honest miners are not controlled by the attacker and follow the standard PoW consensus protocol to mine blocks. The attacker's goal is to double-spend a transaction $TX_1$. In a PoW-based blockchain, a successful double-spending of $TX_1$ occurs when $TX_1$ is confirmed in the current longest branch and another longer branch that includes a conflicting transaction $TX_2$ emerges subsequently. To achieve this, the attacker initiates $TX_1$ and broadcasts it to all the miners. Upon receiving $TX_1$, the honest miners start mining it into a block stored in an honest branch. As the honest miners keep mining new blocks, the honest branch grows, and $TX_1$ will be stored and confirmed in the honest branch. The confirmation of $TX_1$ requires a designated number, denoted as $Z$, of subsequent blocks have been mined and added to the honest branch after the block containing $TX_1$. While waiting for the confirmation of $TX_1$, the attacker commands the malicious miners to secretly mine its own branch including $TX_2$ that follows the current latest block of the honest branch when $TX_1$ is broadcasted to the honest miners, as depicted in Fig. \ref{fig_DSA}. 

When $TX_1$ is confirmed in the honest branch, if the attacker's branch has already surpassed the honest branch, the attacker broadcasts its own branch to the honest miners. As such, all the honest miners switch to mining the attacker's branch according to the PoW consensus protocol, and $TX_2$ is included in the longest branch within the blockchain, successfully replacing $TX_1$. This successful replacement implies that the attacker has double-spent $TX_1$ successfully. If the attacker's branch has not surpassed the honest branch when $TX_1$ is confirmed, the attacker continues mining its branch with the aim of extending its branch to be longer than the honest branch. For a TR-DSA, when the honest branch has grown by $L$ more blocks after $TX_1$ has been confirmed, the attacker stops mining its branch if the attacker's branch does not surpass the honest branch. In other words, a TR-DSA is deemed unsuccessful if the attacker's branch fails to surpass the honest branch when $L$ additional blocks have been added to the honest branch after the confirmation of $TX_1$. 
\captionsetup[figure]{labelsep=period}
\begin{figure*}
	\centering   
	\includegraphics[width=1.0 \textwidth]{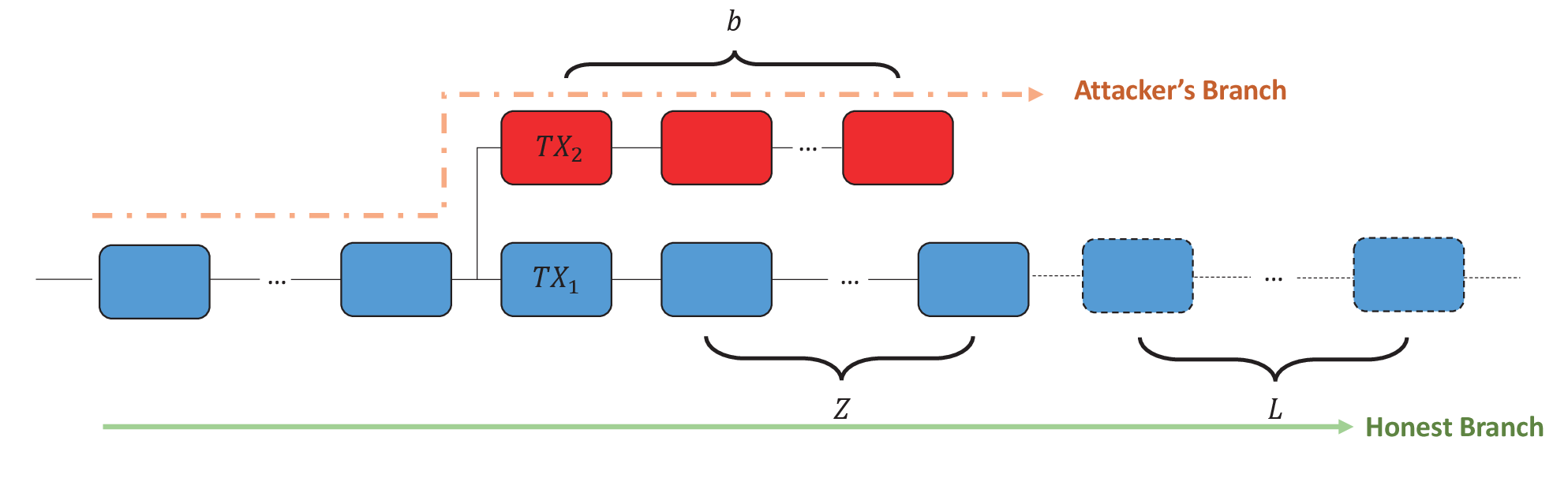}

	\caption{Illustration of a time-restricted double-spending attack.}
	\label{fig_DSA}
\end{figure*}
\subsection{The Success Probability of a Time-Restricted Double-Spending Attack}\label{Section_trdsa}
In this subsection, we derive the success probability of a TR-DSA.
Let $I$ represent the probability that the next block mined in the blockchain is generated by the malicious miners, and therefore, the probability that the next block mined in the blockchain is generated by the honest miners is $1-I$. It is worth mentioning that the scenario where $I=1$ is trivial. In such a scenario, the attacker completely controls the growth of the blockchain and can effortlessly perform double spending on any transactions stored in the blockchain with a guaranteed success probability of one. Thus, we only consider non-trivial scenarios where $0\le I<1$ in the subsequent analysis.

Let $b$ denote the number of blocks mined by the attacker in the attacker's branch when $TX_1$ is confirmed, as illustrated in Fig. \ref{fig_DSA}. There are two possible cases regarding the relationship between $b$ and $Z$. For the case where $b>Z+1$, the attacker's branch is longer than the honest branch when $TX_1$ is confirmed. As such, the attacker can broadcast its branch to all the honest miners to double spend $TX_1$ successfully, which implies that the TR-DSA succeeds. For the other case where $b\le Z+1$, in order to successfully double spend $TX_1$, the attacker has to continue mining its branch after the confirmation of $TX_1$. If the attacker's branch surpasses the honest branch before the honest branch grows by $L$ more blocks after $TX_1$ has been confirmed, the TR-DSA succeeds. Otherwise, the TR-DSA is deemed unsuccessful. Therefore, the success probability of a TR-DSA, $P_s^{(TR)}$, can be expressed as
\begin{align}\label{equ_trdsa} \notag
P_s^{(TR)}=&\Pr\{b> Z+1|\mathcal{E}\}\\
&+\sum_{k=0}^{Z+1}\Pr\{b=k|\mathcal{E}\}P_L(Z+1-k),
\end{align}
where $\mathcal{E}$ stands for the event that $TX_1$ is confirmed, that is, $Z$ blocks are mined and added to the honest branch after the block containing $TX_1$. $\Pr\{b> Z+1|\mathcal{E}\}$ represents the conditional probability that the attacker's branch surpasses the honest branch when $TX_1$ is confirmed. $\Pr\{b=k|\mathcal{E}\}$ is the conditional probability that the attacker's branch grows by $k$ blocks when $TX_1$ is confirmed. $P_L(Z+1-k)$ denotes the probability of the event where the attacker's branch lags by $(Z+1-k)$ blocks behind the honest branch when $TX_1$ is confirmed, and the attacker extends its branch to be longer than the honest branch before the honest branch grows by $L$ blocks after $TX_1$ has been confirmed.

The probability $\Pr\{b=k|\mathcal{E}\}$ has been well studied in previous works \cite{rosenfeld2014analysis,grunspan2018double} which show that $b$ follows a negative binomial distribution. To be specific,
\begin{align}\label{equ_bio}
\Pr\{b=k|\mathcal{E}\}=I^k(1-I)^{Z+1}{k+Z \choose k}.
\end{align} 
In addition, $\Pr\{b> Z+1|\mathcal{E}\}$ in (\ref{equ_trdsa}) can be obtained from the fact that 
\begin{align}\label{equ_confirm}\notag
\Pr\{b> Z+1|\mathcal{E}\}=&1-\sum_{k=0}^{Z+1}\Pr\{b=k|\mathcal{E}\}\\
=&1-\sum_{k=0}^{Z+1}I^k(1-I)^{Z+1}{k+Z \choose k}.
\end{align}
%It’s worth mentioning that $I$ (i.e., the probability of the attacker finding the next block) equals the $q$ (i.e., the ratio of the malicious miners’ hash rate to the total network hash rate) is commonly adopted in the
%previous literature \cite{nakamoto2008bitcoin,rosenfeld2014analysis,grunspan2018double,zaghloul2020bitcoin,zhang2019double}. However, $I = q$ only holds in a synchronized network where
%there are no propagation delays and all honest miners work on the same honest branch. If the propagation delay is considered, $I$ is no longer equal to $q$, but it is determined by a function related to $q$ and the delay, as shown in \cite{zhang2019double}.
Consequently, it is seen from (\ref{equ_trdsa}) that in order to derive $P_s^{(TR)}$, we only need to develop a closed-form expression for $P_L(Z+1-k)$.

In order to develop a closed-form expression for $P_L(Z+1-k)$, we consider a competition between an attacker's branch and an honest branch where the probabilities of the next block being mined and added in the attacker's branch and the honest branch are $I$ and $1-I$, respectively. The attacker wins this competition if the attacker's branch surpasses the honest branch before the honest branch grows by $l$ blocks. On the other hand, the attacker loses the competition when the honest branch grows by $l$ blocks and the attacker's branch doesn't surpass the honest branch. Let $Q(l,m,n)$ denote the probability that the attacker finally wins the competition in which at some time instant, the attacker's branch lags by $m$ blocks behind the honest branch and the honest branch has already grown by $n$ blocks, where $m=-1,0,...,$ and $n=0,1,...,l$. It is important to note that, if $m=Z+1-k$, $n=0$, and $l=L$, the probability that the attacker finally wins the competition, $Q(L,Z+1-k,0)$, is just the probability $P_L(Z+1-k)$. In other words, the probability $P_L(Z+1-k)$ just corresponds to a special case of the competition.
%Let's consider a scenario where the attacker's branch and the honest branch, each having distinct lengths, compete to grow.  In this context, each time the attacker's branch (the honest branch) increases one block with probability $I$ ($1-I$), and the competition concludes under two distinct circumstances. First, if the number of blocks increased in the honest branch is less than $l$ during the competition and the attacker' branch surpasses the honest branch, then the attacker's branch wins the competition and the competition ends. Second, if the honest branch has increased by $l$ blocks during the competition and the attacker' branch is no longer than the honest branch, then the attacker's branch fails the competition and the competition ends. In such a competition, let $Q(l,m,n)$ denote the probability that the attacker' branch wins the competition if the attacker's branch lags by $m$ blocks behind the honest branch when the honest branch has increased by $n$ blocks since the competition starts, where $m=-1,0,...$ and $n=0,1,...,l$. Note that, if the attacker's branch lags by $(Z+1-k)$ blocks behind the honest branch when the competition starts (i.e., $n=0$), and we assign  $l=L$ for this competition, then the probability that the attacker's branch wins this competition, $Q(L,Z+1-k,0)$ is just the probability $P_L(Z+1-k)$ according to the definition of $P_L(Z+1-k)$. 
This motivates us to pursue a closed-form expression for $Q(l,m,n)$ to develop $P_L(Z+1-k)$, which is presented below.

Suppose that at a time instant, the attacker's branch lags by $m$ blocks behind the honest branch, and the honest branch has already grown by $n$ blocks during the competition. If the next block is mined and added in the attacker's branch, which happens with probability $I$, then the attacker's branch will lag by $(m-1)$ blocks behind the honest branch, and the length of the honest branch will remain unchanged. In contrast, if the next block is mined and added in the honest branch, which happens with probability $1-I$, the attacker's branch will lag by $(m+1)$ blocks behind the honest branch, and the honest branch will have grown by $(n+1)$ blocks. To this end, 
the recursive form of $Q(l,m,n)$ can be expressed as,  $m =0,1,...,$ and  $\forall n \in \{0,1,..., l-1\}$, 
\begin{align} \label{equ_2D}\notag
Q(l,m,n)=&I\times Q(l,m-1,n)\\
&+(1-I)\times Q(l,m+1,n+1).
\end{align}

It is worth mentioning that if the attacker's branch lags by $-1$ blocks behind the honest branch (i.e., the attacker's branch has surpassed the honest branch) before the honest branch grows by $l$ blocks during the competition (i.e., $n<l$), the attacker wins the competition. Therefore, we have the following boundary condition
\begin{equation} \label{boundary_condition_1}
	Q(l,-1,n)=1, \quad \forall n\in\{0,1,...,l-1\}.
\end{equation}
On the other hand, if the honest branch has grown by $l$ blocks during the competition while the attacker's branch is still not longer than the honest branch, then the attacker loses the competition.  Hence, we encounter another boundary condition
\begin{equation}\label{boundary_condition_2}
	Q(l,m,l)=0, \quad m =1,2,....
\end{equation}
%It is worth pointing out that when an honest branch increases one block and then the competition ends (i.e., $n=l$), the length of the attacker's branch cannot be the same as or longer than that of the honest branch. This is because if so, then before the honest branch increases one block, the attacker's branch has already been longer than the honest branch which implies that the competition should end before the honest branch increases one block. To this end, we don't need to include the case where $n=l$ in the boundary condition in (\ref{boundary_condition_1}) and $m=0$ in the boundary condition in (\ref{boundary_condition_2}).

\begin{figure}\centering 
	\includegraphics[width=0.5\textwidth]{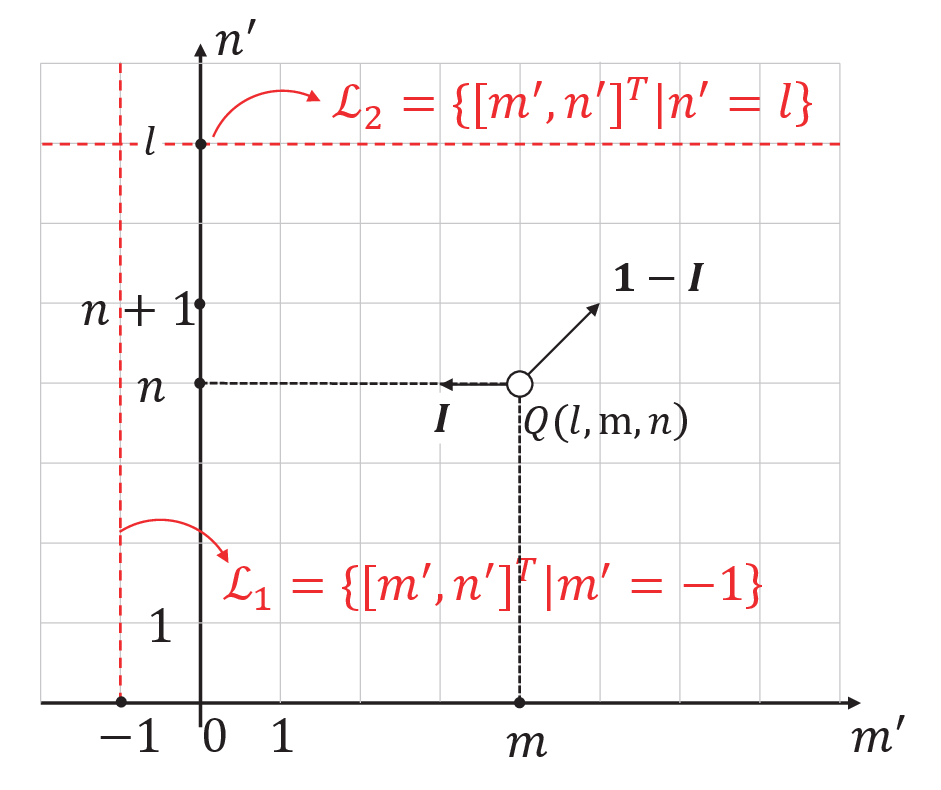}
	\caption{Two-dimensional random walk illustration.}
	\label{Fig_random_walk}
%	\vspace{10pt}
\end{figure}

By employing (\ref{equ_2D}), (\ref{boundary_condition_1}) and (\ref{boundary_condition_2}), the pursuit of the closed-form expression for $Q(l,m,n)$ can be cast as a two-sided boundary hitting problem for a two-dimensional random walk with two possible moving directions, which is illustrated in Fig. \ref{Fig_random_walk}. To be specific, let ${\boldsymbol s}_T \triangleq [m,n]^T + \sum_{t=1}^{T} \bdelta_t$ denote the point of the random walk in the two-dimensional space which starts at the initial point $[m,n]^T$ and moves by $T$ steps (i.e., the random walk moves by $T$ times which corresponds to the case where $T$ blocks have been mined in the blockchain). Hence, for any $t$, $\bdelta_t$ represents the movement of the random walk in the $t$-th step which is a random vector 
\begin{equation}
	{\bdelta _t} = \left\{ {\begin{array}{*{20}{c}}
{\bdelta^{(1)} \triangleq  {{[ - 1,0]}^T}}&{\text{with probability }I,}\\
{\bdelta^{(2)} \triangleq {{[1,1]}^T}}&{\text{with probability } 1 - I.}
\end{array}} \right.
\end{equation}
Let $\cL_1  \buildrel \Delta \over = \{ {[m',n']^T}\left| {m' =  - 1} \right.\}$ and ${\cL_2} \buildrel \Delta \over = \{ {[m',n']^T}\left| {n' = l} \right.\}$ define two boundary lines in the two-dimensional space, respectively. As such, $Q(l,m,n)$ is essentially the probability that the random walk hits $\cL_1$ before hitting $\cL_2$, which can be written as
\begin{equation} \label{P_m_n_closed_form}
	Q(l,m,n)\! =\! \sum_{T = 0}^\infty  \!{\Pr \left( {{\bs_T}\! \in\! {\cL_1},{\bs_{T'}}\! \notin {\cL_1} \cup {\cL_2},\;\forall T' < T} \right)}.
\end{equation}
For such a two-sided boundary hitting problem, the closed-form expression for $Q(l,m,n)$ for any given $l,m$, and $n$ is described in the following theorem.
\begin{theorem}\label{Theorem_DSA}
The probability $Q(l,m,n)$ that the attacker finally wins the competition in which at some time instant, the attacker's branch lags by $m$ blocks behind the honest branch and the honest branch has already grown by $n$ blocks can be expressed as, $m =-1,0,...$, and  $\forall n \in \{0,1,\cdots, l\}$,  
\begin{equation}\setlength{\arraycolsep}{1pt}
\begin{aligned} \label{P_m_n_theorem}
&Q(l,m,n)\\
&=\left\{\begin{array}{{c}{l}}\sum_{i=0}^{l-n-1}a_{i,m}(1-I)^iI^{m+1+i}&,\text{if }m\ge0,0\le n<l,\\
1&,\text{if }m=-1,0\!\le\!n\!<\!l,\\
0&,\text{if }m>0,n=l,\end{array}\right.
\end{aligned}
\end{equation}
where the coefficient $a_{i,m}$ is defined as
\begin{equation}\label{a_i_m_theorem}
\setlength{\arraycolsep}{1pt}
{a_{i,m}} = \left\{ {\begin{array}{*{5}{cl}}
	{1,}&{{\rm{if }} \; i = 0,}\\
	{1 + m,}&{{\rm{if }}\; i = 1,}\\
	{{C_i},}&{{\rm{if }}\; m = 0,}\\
	{{C_{i + 1}},}&{{\rm{if }}\; m = 1,}\\
	\begin{array}{l}
	{C_{i + 1}} + \sum\limits_{{j_1} = 3}^{m + 1} {{C_i}} + \sum\limits_{{j_1} = 3}^{m + 1} {\sum\limits_{{j_2} = 3}^{{j_1} + 1} {{C_{i - 1}}} }\\
 \quad    +  \cdots  
	   + \sum\limits_{{j_1} = 3}^{m + 1} {\sum\limits_{{j_2} = 3}^{{j_1} + 1}  \cdots  } \sum\limits_{{j_{i - 2}} = 3}^{{j_{i - 3}} + 1} {{C_3}} \\
	\quad   + \sum\limits_{{j_1} = 3}^{m + 1} {\sum\limits_{{j_2} = 3}^{{j_1} + 1}  \cdots  } \sum\limits_{{j_{i - 1}} = 3}^{{j_{i - 2}} + 1} {(1 + {j_{i-1}})} ,
	\end{array}&\ \begin{aligned}&{\rm{if }} \; i \! > \!1\\&\rm{and}\;m > 1,\end{aligned}
	\end{array}} \right.
\end{equation} 	
and the constant $C_i$ is the $i$-th Catalan number which is given by
\begin{equation}
	C_i \buildrel \Delta \over =  \frac{1}{i+1}{2i\choose i} = \frac{{(2i)!}}{{(i + 1)!i!}}.
\end{equation}
\end{theorem}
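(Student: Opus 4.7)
The plan is to prove (\ref{P_m_n_theorem}) by combining the two-sided boundary-hitting representation in (\ref{P_m_n_closed_form}) with a direct path-counting argument. The boundary cases $m=-1$ (yielding $1$) and $n=l$ (yielding $0$) in (\ref{P_m_n_theorem}) are immediate from (\ref{boundary_condition_1}) and (\ref{boundary_condition_2}), so the work lies in the interior case $m\ge 0,\; n<l$. I would classify every sample path that first hits $\cL_1$ by its count $i$ of $\bdelta^{(2)}$ steps: such a path takes exactly $m+1+i$ steps of type $\bdelta^{(1)}$ (to lower the first coordinate from $m$ to $-1$), terminates on a $\bdelta^{(1)}$ step (since that is when $\cL_1$ is first reached), and must keep its second coordinate strictly below $l$, which translates to $i\le l-n-1$. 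Each such path contributes probability $I^{m+1+i}(1-I)^i$, so the coefficient $a_{i,m}$ must enumerate the admissible orderings of the first $m+2i$ steps.

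To obtain a usable closed form for $a_{i,m}$, I would recast the ordering constraint as a one-dimensional lattice walk of length $m+2i$ starting at $0$, ending at $m$, taking $+1$ for each $\bdelta^{(1)}$ and $-1$ for each $\bdelta^{(2)}$, and staying weakly below $m$ at every prefix (so that the first coordinate does not hit $-1$ prematurely). The reflection principle then gives
\begin{equation*}
a_{i,m}\;=\;\binom{m+2i}{i}-\binom{m+2i}{i-1}.
\end{equation*}
This binomial-difference form is attractive for two reasons: first, the four simple cases in (\ref{a_i_m_theorem}) can be recovered by direct substitution together with the identity $\binom{2i}{i}-\binom{2i}{i-1}=C_i$; second, Pascal's rule applied twice yields the recurrence $a_{i,m}=a_{i,m-1}+a_{i-1,m+1}$ for $i,m\ge 1$, which is exactly what is required to reduce the recursion (\ref{equ_2D}) to coefficient matching on both sides of the candidate formula (after equating powers of $I^{m+1+i}(1-I)^i$).

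The final task is then to verify that the nested-sum expression in (\ref{a_i_m_theorem}) is genuinely equal to this binomial-difference form. I would do this by iteratively unwinding the recurrence $a_{i,m}=a_{i,m-1}+a_{i-1,m+1}$ downward in its first argument: one round produces $a_{i,m}=C_{i+1}+\sum_{j_1=3}^{m+1} a_{i-1,j_1}$ (using the base value $a_{i,1}=C_{i+1}$), a second round reduces each inner $a_{i-1,j_1}$ and peels off a new constant $C_i$ inside one more summation $\sum_{j_2=3}^{j_1+1}(\cdot)$, and after $i-1$ rounds the innermost terms become $a_{1,j_{i-1}}=1+j_{i-1}$ while the peeled constants are exactly $C_{i+1},C_i,\ldots,C_3$, reproducing (\ref{a_i_m_theorem}) verbatim. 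The main obstacle here is the bookkeeping: carefully tracking the telescoped index ranges $j_k\in\{3,\ldots,j_{k-1}+1\}$ across the $i-1$ unwinding levels and confirming that the induction on $i$ terminates cleanly at the stated base cases, which is mechanical once the binomial-difference form is in hand.
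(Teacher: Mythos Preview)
Your proposal is correct, and its overall architecture---path-count decomposition by the number $i$ of $\bdelta^{(2)}$ steps, followed by identification of the coefficients $a_{i,m}$ and a final unwinding of the recurrence $a_{i,m}=a_{i,m-1}+a_{i-1,m+1}$ to recover the nested-sum form---matches the paper's. The genuine point of departure is how you pin down $a_{i,m}$. The paper never writes a compact closed form: it substitutes the ansatz $\sum_i a_{i,m}(1-I)^iI^{m+1+i}$ into the recursion~(\ref{equ_2D}), reads off the recurrence for $a_{i,m}$ by coefficient matching, then argues separately (via Dyck paths) that $a_{i,0}=C_i$, and only then unwinds to the nested sums. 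You instead invoke the reflection principle at level $m+1$ on the one-dimensional partial-sum walk to obtain directly
\[
a_{i,m}=\binom{m+2i}{i}-\binom{m+2i}{i-1},
\]
the ballot number, and derive the recurrence afterwards from Pascal's rule. This is more streamlined: the four boundary cases in~(\ref{a_i_m_theorem}) drop out by substitution (using $\binom{2i}{i}-\binom{2i}{i-1}=C_i$), the Dyck-path identification is absorbed into the reflection step, and you obtain as a by-product a simpler closed form for $a_{i,m}$ that the paper does not state. The price is that you still have to carry out the same nested-sum unwinding to match the theorem's literal statement, so neither approach escapes that bookkeeping.
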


\begin{proof}[Proof:\nopunct]
From (\ref{equ_2D}) and (\ref{boundary_condition_1}), it is easy to see that $Q(l,m,n)=1$ if $m=-1$ and $Q(l,m,n)=0$ if $n=l$. Therefore, we just need to prove the theorem for the case that $m\ge0$ and $0\le n<l$.

	As illustrated in Fig. \ref{Fig_random_walk}, if the random walk can hit the boundary line $\cL_1$ before hitting the boundary line $\cL_2$, all the possible cases are that $\forall i=0,1,...,l-n-1$, the random walk moves along $\bdelta^{(1)}$ by $(m+1+i)$ steps and moves along $\bdelta^{(2)}$ by $i$ steps.
% it is straightforward to see that it needs to move $m+1+i$ steps to the left and $i$ steps to the upper right ($i=0,1,...,L-n-1$). 
Thus, $Q(l,m,n)$ can be written in the general form
\begin{equation} \label{equ_genForm}
Q(l,m,n)=\sum_{i=0}^{l-n-1}a_{i,m}(1-I)^iI^{m+1+i},
\end{equation}
where $a_{i,m}$ is a constant which denotes the number of all the paths that the random walk hits the boundary $\cL_1$ in the $(m+2i+1)$-th step by moving $(m+i+1)$ steps along $\bdelta^{(1)}$  and $i$ steps along $\bdelta^{(2)}$ and the random walk never hits $\cL_1$ before the $(m+2i+1)$-th step.  It is obvious that 
\begin{equation} \label{a_0_m}
	a_{0,m}=1 \;\; \text{for  } m \ge 0,
\end{equation}
and hence
\begin{equation} \label{P_m_n_alt}
	Q(l,m,n)=I^{(m+1)} + \sum_{i=1}^{l-n-1}a_{i,m}(1-I)^iI^{m+1+i}.
\end{equation}
By substituting (\ref{equ_genForm}) and (\ref{P_m_n_alt}) into (\ref{equ_2D}), we can obtain
%Note that we already get the equation (\ref{equ_2D}), substitute equation (\ref{equ_genForm}) to (1), we can get
\begin{align} \notag
&I^{m+1}+\sum_{i=1}^{l-n-1}a_{i,m}(1-I)^iI^{m+1+i}\\ \notag
&=I\times\left(\sum_{i=0}^{l-n-1}a_{i,m-1}(1-I)^iI^{m+i}\right) \\ \notag
&\qquad  +(1-I)\times \left(\sum_{i=0}^{l-n-2}a_{i,m+1}(1-I)^iI^{m+2+i}\right)\\  \notag
&=\sum_{i=0}^{l-n-1}a_{i,m-1}(1-I)^iI^{m+1+i}\\\notag
&\qquad+\sum_{i=0}^{l-n-2}a_{i,m+1}(1-I)^{i+1}I^{m+2+i}\\  \notag
&=I^{m+1}+\sum_{i=1}^{l-n-1}a_{i,m-1}(1-I)^iI^{m+1+i}\\  \notag
&\qquad  +\sum_{i=1}^{l-n-1}a_{i-1,m+1}(1-I)^{i}I^{m+1+i}\\ \label{I_poly}
&=I^{m+1}+\sum_{i=1}^{l-n-1}[a_{i,m-1}+a_{i-1,m+1}](1-I)^iI^{m+1+i}.
\end{align}
Since (\ref{I_poly}) holds for any $I$, by the fundamental theorem of algebra, we know 
\begin{equation} \label{equ_Recur}
a_{i,m}=a_{i,m-1}+a_{i-1,m+1}, \;\; \forall i>0 \text{ and }m\ge0.
\end{equation}
Since $a_{0,m}=1$ for any $m \ge 0$, we can get \begin{equation} \label{a_1_m}
	a_{1,m}=a_{1,m-1}+1 = a_{1,0} + m ,\;\; \forall m\ge 0
\end{equation}
from (\ref{equ_Recur}) by choosing $i=1$.
Furthermore, by setting $m=0$ in (\ref{equ_2D}), we have 
\begin{equation}
	Q(l,0,n)=I+(1-I)\times Q(l,1,n+1),
\end{equation}
which yields 
\begin{align}\label{I_poly_2} \notag
&I+\sum_{i=1}^{l-n-1}a_{i,0}(1-I)^{i}I^{1+i}\\
&=I+\sum_{i=1}^{l-n-1}a_{i-1,1}(1-I)^{i}I^{1+i},
\end{align}
by employing (\ref{equ_genForm}).
By the fundamental theorem of algebra, we know from (\ref{I_poly_2}) that
\begin{equation} \label{a_recursive}
a_{i,0}=a_{i-1,1}, \;\; \forall i=1,...,l-n-1.
\end{equation}
From (\ref{a_0_m}) and (\ref{a_recursive}), we can obtain
\begin{equation}
	a_{1,0}=a_{0,1}=1,
\end{equation}  
and therefore, 
\begin{equation} \label{a_1_m_final}
	a_{1,m}=1+m \;\; \forall m\ge 0,
\end{equation}
by employing (\ref{a_1_m}).
From the recursive equation in (\ref{equ_Recur}), we can obtain that for any $i =1,2,...,l-n-1$,
%$$\begin{matrix}a_{i,m}-a_{i,m-1}=a_{i-1,m+1}\\
%\vdots\\
%a_{i,2}-a_{i,1}=a_{i-1,3}\end{matrix}$$
%Sum it up we can get 
\begin{equation} \label{a_i_m_1}
	a_{i,m}=a_{i,1}+\sum_{j=3}^{m+1}a_{i-1,j}, \;\forall m>1.
\end{equation}
Moreover, from (\ref{a_recursive}), we know
\begin{equation} \label{a_i_1_2}
	a_{i,1}=a_{i+1,0}, \;\; \forall i=0,...,l-n-2,
\end{equation} 
which implies that for all $i=1,...,l-n-2$ and $m > 1$,
\begin{align} \notag
a_{i,m}&=a_{i+1,0}+\sum_{j_1=3}^{m+1}a_{i-1,j_1}\\  \notag
&=a_{i+1,0}+\sum_{j_1=3}^{m+1}a_{i,0}+\sum_{j_1=3}^{m+1}\sum_{j_2=3}^{j_1+1}a_{i-1,0}+\cdots\\ \notag
&\quad+\sum_{j_1=3}^{m+1}\sum_{j_{2}=3}^{j_{2}+1}\cdots\sum_{j_{i-2}=3}^{j_{i-3}+1}a_{3,0}+\sum_{j_1=3}^{m+1}\sum_{j_{2}=3}^{j_{2}+1}\cdots\sum_{j_{i-1}=3}^{j_{i-2}+1}a_{1,j_{i-1}} \\ \notag
&=a_{i+1,0}+\sum_{j_1=3}^{m+1}a_{i,0}+\sum_{j_1=3}^{m+1}\sum_{j_2=3}^{j_1+1}a_{i-1,0}+\cdots\\ \notag
&\quad+\sum_{j_1=3}^{m+1}\sum_{j_{2}=3}^{j_{2}+1}\cdots\sum_{j_{i-2}=3}^{j_{i-3}+1}a_{3,0} \\  \label{a_i_m_final_1}
& \quad +\sum_{j_1=3}^{m+1}\sum_{j_{2}=3}^{j_{2}+1}\cdots\sum_{j_{i-1}=3}^{j_{i-2}+1}(1+j_{i-1}),
\end{align}
by employing (\ref{a_1_m_final}) and (\ref{a_i_m_1}).

Next, we need to determine $a_{i,m}$ when $i=l-n-1$.
%Now, we need to get $a_{i.m}$ when $i=L-n-1$. 
Suppose that there is another boundary line ${\cL_3} \buildrel \Delta \over = \{ {[m',n']^T}\left| {n' = l+1} \right.\}$ and $Q(l+1,m,n)$ is the probability that the random walk hits the boundary line $\cL_1$ before hitting the boundary line $\cL_3$. From (\ref{equ_genForm}), we know 
$$Q(l+1,m,n)=\sum_{i=0}^{l-n}a_{i,m}(1-I)^iI^{m+1+i}.$$
Similar to (\ref{a_i_1_2}), we can get 
\begin{equation} \label{a_prime_rec}
	a_{i,1}=a_{i+1,0}, \;\; \forall i=0,...,l-n-1.
\end{equation}
From (\ref{a_prime_rec}), we know
\begin{equation}
a_{l-n-1,1}=a_{l-n,0}.
\end{equation}
%So when $i=L-n-1$ we can get $a_{i,1}=a'_{i,1}=a'_{i+1,0}$ and
Similar to (\ref{a_i_m_final_1}), we can obtain that for all $m > 1$,
\begin{align} \notag
a_{l-n-1,m} & =a_{l-n,0}+\sum_{j_1=3}^{m+1}a_{l-n-2,j_1}\\ \notag
&=a_{l-n,0}+\sum_{j_1=3}^{m+1}a_{l-n-1,0}+\sum_{j_1=3}^{m+1}\sum_{j_2=3}^{j_1+1}a_{l-n-2,0}\\ \notag
&\qquad+\cdots+\sum_{j_1=3}^{m+1}\sum_{j_{2}=3}^{j_{2}+1}\cdots\sum_{j_{l-n-3}=3}^{j_{l-n-4}+1}a_{3,0} \\ \label{a_i_m_final_2}
& \qquad +\sum_{j_1=3}^{m+1}\sum_{j_{2}=3}^{j_{2}+1}\cdots\sum_{j_{l-n-2}=3}^{j_{l-n-3}+1}(1+j_{l-n-2}).
\end{align}
%$$\begin{aligned}a_{i,m}&=a'_{i+1,0}+\sum_{j_1=3}^{m+1}a_{i-1,j_1}\\
%&=a'_{i+1,0}+\sum_{j_1=3}^{m+1}a_{i,0}+\sum_{j_1=3}^{m+1}\sum_{j_2=3}^{j_1+1}a_{i-1,0}+\cdots+\sum_{j_1=3}^{m+1}\sum_{j_{2}=3}^{j_{2}+1}\cdots\sum_{j_{i-2}=3}^{j_{i-3}+1}a_{3,0}+\sum_{j_1=3}^{m+1}\sum_{j_{2}=3}^{j_{2}+1}\cdots\sum_{j_{i-1}=3}^{j_{i-2}+1}a_{1,j_{i-1}}\end{aligned}$$

It is seen from (\ref{a_0_m}), (\ref{a_i_m_final_1}), and (\ref{a_i_m_final_2}) that if we can determine the value of $a_{i,0}$ for all $i=3,4,...,l-n$, then we can determine the value of $a_{i,m}$ for all $i=0,1,...,l-n-1$ and $\forall m \ge 0$, and hence, we can obtain the closed-form expression for $Q(l,m,n)$ by using (\ref{equ_genForm}). In what follows, we will derive the closed-form expression for $a_{i,0}$ for all $i=3,4,...,l-n$. 

%\begin{figure}
%	\centering   
%	\includegraphics[width=0.5\textwidth]{catalan.eps}    
%	\caption{Two example Dyck Paths}
%	\label{fig_catalan}
%\end{figure}

%Next let $m=0$, we need to find the value of $a_{i,0}$. 

Note that  $a_{i,0}$ denotes the number of all the paths that the random walk hits the boundary line $\cL_1$ in the $(2i+1)$-th step by moving $(i+1)$ steps along $\bdelta^{(1)}$  and $i$ steps along $\bdelta^{(2)}$ and the random walk never hits $\cL_1$ before the $(2i+1)$-th step. We know that for any such path in the two-dimensional space, it starts from the point $[0,n]^T$ and arrives at the point $[0,n+i]^T$ of the $(2i)$-th step, and moreover, the first $2i$ steps of the path must stay in the half space $\cS \triangleq  \{ {[m',n']^T}\left| {m' \ge 0} \right.\}$.
%Since such path hits the boundary $\cL_1$ in the $(2i+1)$-th step and never hits $\cL_1$ before the $(2i+1)$-th step, 
In light of this, $a_{i,0}$ is identical to the number of all the lattice paths from the point $[0,0]^T$ to the point $[i,i]^T$ which consist of $i$  steps along the vector $[0,1]^T$ and $i$  steps along the vector $[1,0]^T$ and never rise above the diagonal line in the $i$-by-$i$ grid. %$i$ rightward and $i$ upward steps in a two-dimensional lattice.
Such paths are referred to as the Dyck Paths, and the number of such paths is known as the $i$-th Catalan number $C_i$ \cite{stanley2015catalan}. Therefore, we know
\begin{align} \label{a_prime_c_i} \notag
	a_{i,0}&=C_i \\
&\triangleq  \frac{1}{i+1}{2i\choose i} = \frac{{(2i)!}}{{(i + 1)!i!}}, \;\; \forall i=3,...,l-n-1.
\end{align}
By employing (\ref{boundary_condition_1}), (\ref{boundary_condition_2}), (\ref{equ_genForm}), (\ref{a_0_m}), (\ref{a_1_m_final}), (\ref{a_i_m_final_1}), (\ref{a_prime_rec}),  (\ref{a_i_m_final_2}) and (\ref{a_prime_c_i}), we can obtain $m =-1,0,...$ and  $\forall n \in \{0,1,\cdots, l\}$, 
\begin{equation}\setlength{\arraycolsep}{1pt}
\begin{aligned} 
&Q(l,m,n)\\
&=\left\{\begin{array}{{c}{l}}\sum_{i=0}^{l-n-1}a_{i,m}(1-I)^iI^{m+1+i}&,\text{if }m\ge0,0\le n<l,\\
1&,\text{if }m=-1,0\!\le\!n\!<\!l,\\
0&,\text{if }m>0,n=l,\end{array}\right.
\end{aligned}
\end{equation}
where the coefficient $a_{i,m}$ can be expressed as

\begin{equation}
\setlength{\arraycolsep}{1pt}
{a_{i,m}} = \left\{ {\begin{array}{*{5}{c}}
	{1,}&{{\rm{if }} \; i = 0,}\\
	{1 + m,}&{{\rm{if }}\; i = 1,}\\
	{{C_i},}&{{\rm{if }}\; m = 0,}\\
	{{C_{i + 1}},}&{{\rm{if }}\; m = 1,}\\
	\begin{array}{l}
	{C_{i + 1}} + \sum\limits_{{j_1} = 3}^{m + 1} {{C_i}} + \sum\limits_{{j_1} = 3}^{m + 1} {\sum\limits_{{j_2} = 3}^{{j_1} + 1} {{C_{i - 1}}} }\\
 \quad   +  \cdots  
	   + \sum\limits_{{j_1} = 3}^{m + 1} {\sum\limits_{{j_2} = 3}^{{j_1} + 1}  \cdots  } \sum\limits_{{j_{i - 2}} = 3}^{{j_{i - 3}} + 1} {{C_3}} \\
	\quad   + \sum\limits_{{j_1} = 3}^{m + 1} {\sum\limits_{{j_2} = 3}^{{j_1} + 1}  \cdots  } \sum\limits_{{j_{i - 1}} = 3}^{{j_{i - 2}} + 1} {(1 + {j_{i-1}})} ,
	\end{array}&\ \begin{aligned}&{\rm{if }} \; i  > 1\\&\rm{and}\;m > 1,\end{aligned}
	\end{array}} \right.
\end{equation} 	
which completes the proof.

\end{proof}

From (\ref{equ_trdsa}), (\ref{equ_bio}), (\ref{equ_confirm}), and (\ref{P_m_n_theorem}), we can obtain the success probability of a TR-DSA,
\begin{align}\label{equ_trdsa_1}\notag
P_s^{(TR)}=&\Pr\{b> Z+1|\mathcal{E}\}\\\notag
&+\sum_{k=0}^{Z+1}\Pr\{b=k|\mathcal{E}\}P_L(Z+1-k)\\\notag
=&1-\sum_{k=0}^{Z+1}\Pr\{b=k|\mathcal{E}\}\\\notag
&+\sum_{k=0}^{Z+1}\Pr\{b=k|\mathcal{E}\}Q(L,Z+1-k,0)\\ \notag
=&1-\sum_{k=0}^{Z+1}\Bigg\{{k+Z\choose k} I^k(1-I)^{Z+1}\\
&\times\left[1-\sum_{i=0}^{L-1}a_{i,Z+1-k}(1-I)^iI^{Z+2-k+i}\right]\Bigg\},
\end{align}
where $a_{i,Z+1-k}$ is defined in (\ref{a_i_m_theorem}) for any $Z$ and $k$.

As previously mentioned, the TR-DSA model is well suited to describe DSAs on blockchain applications that require tasks to be completed within finite timeframes. With the derived success probability of TR-DSA from (\ref{equ_trdsa_1}), these applications can measure their vulnerability to double-spending attacks. In particular, they can calculate the required number of blocks $Z$ for transaction confirmation, which can ensure that the success probability of a TR-DSA is smaller than a prescribed threshold, and hence substantially reduce the risk of a TR-DSA in their applications. Additionally, the TR-DSA model is appropriate for describing DSAs for scenarios where a practical attacker cannot indefinitely invest its computational resources in conducting a single DSA. In light of this, the developed probability in (\ref{equ_trdsa_1}) can help potential practical attackers to evaluate the feasibility and expected reward of launching a DSA.
%Using (\ref{equ_trdsa_1}), an attacker with limited computational resources can assess a precise expected profit of launching a TR-DSA. To be specific, an accurate expected profit calculation should factor in the precise number of blocks the attacker mines if the DSA is successful. Here, (\ref{equ_trdsa_1}) enumerates all possible cases of the attacker's branch surpassing the honest branch after the confirmation (i.e., $i=0,1,...,L-1$), along with the corresponding number of blocks mined by the attacker (i.e., $Z+2+i$), and the corresponding probability (i.e., $a_{i,Z+1-k}(1-I)^iI^{Z+2-k+i}$), which enable an accurate computation of the expected profit. With this precise expected profit in mind, an attacker can then determine the viability of initiating a double-spending attack. Moreover, as previously mentioned, the TR-DSA is more suitable for modeling DSAs on blockchain applications that require tasks to be completed within finite timeframes. This developed success probability of a TR-DSA in (\ref{equ_trdsa_1}) allows these applications to reevaluate their risk of a double-spending attack. Additionally, with the aid of this closed-form expression, these blockchain applications can fine-tune the required number of blocks $Z$ for transaction confirmation, substantially reducing the risk of a TR-DSA.} %In addition, the proof of Theorem \ref{Theorem_DSA} provides not only a solution to the probability of a successful catch-up process of the attacker's branch in TR-DSA but also offers an alternative method to analyze and solve such two-dimensional random walk problems.}
\subsection{Comparison of the Success Probability of A TR-DSA and the Success Probability of A TU-DSA}\label{Section_vs}
In this subsection, we compare the success probability of a TR-DSA with that of a TU-DSA.
We start by revisiting the success probability of a TU-DSA.
In the TU-DSA model where an attacker can indefinitely mine its branch to surpass the honest branch, the pursuit of the probability that the attacker's branch will surpass the honest branch when the attacker's branch lags by $m$ blocks behind the honest branch can be characterized as an infinite Gambler's Ruin problem \cite{nakamoto2008bitcoin,rosenfeld2014analysis,grunspan2018double,ozisik2017explanation,zaghloul2020bitcoin}, and this probability, denoted as $P(m)$, is
\begin{align}\label{equ_tudsa_catch}
P(m)=\left\{{\begin{array}{*{5}{c}} 
\left(\frac{I}{1-I}\right)^{m+1}&,\text{ if }\;0\le I<0.5,\\
1&,\text{ if }\;0.5\le I\le 1,
\end{array}}\right.
\end{align}
 and the success probability of a TU-DSA, $P_s^{(TU)}$, is \cite{nakamoto2008bitcoin,rosenfeld2014analysis,grunspan2018double,ozisik2017explanation,zaghloul2020bitcoin},
\begin{align}\label{equ_tudsa}\notag
P_s^{(TU)}=&1-\sum_{k=0}^{Z+1}\Pr\{b=k|\mathcal{E}\}\\\notag
&+\sum_{k=0}^{Z+1}\Pr\{b=k|\mathcal{E}\}P(Z+1-k)\\ \notag
=&1-\sum_{k=0}^{Z+1}\Bigg\{{k+Z\choose k}I^k(1-I)^{Z+1}\\ 
&\times\left[1-\left(\frac{I}{1-I}\right)^{Z+2-k}\right]\Bigg\}.
\end{align}

It is seen from equations (\ref{equ_trdsa_1}) and (\ref{equ_tudsa}) that  the essence of comparison between $P_s^{(TR)}$ and $P_s^{(TU)}$ is the relationship between $Q(L,m,n)$ and $P(m)$. In what follows, we will concentrate on contrasting $Q(L,m,n)$ and $P(m)$.

Intuitively speaking, if an attacker launches a TR-DSA but with $L\rightarrow\infty$, this TR-DSA should degenerate to a TU-DSA because as $L\rightarrow\infty$, the attacker essentially can indefinitely mine its branch. Hence, the probability $Q(l,m,n)$ is expected to converge to the probability $P(m)$ as $l\to\infty$. Furthermore, as $L$ increases, the attacker is anticipated to gain a greater opportunity to mine its branch to be longer than the honest branch. This implies that $Q(l,m,n)$ is expected to increase as $l$ increases. 
The following theorem provides a formal summary and rigorous proof of these intuitive conjectures which relate $Q(l,m,n)$ and $P(m)$.
%{\cbe It's worth mentioning that the previous studies \cite{ozisik2017explanation,grunspan2018double,zaghloul2020bitcoin} using the infinite Gambler's Ruin model assume that the one-dimensional random walk stops once hitting a left boundary $\cL_1$ or a right boundary $\cL_4  \buildrel \Delta \over = \{ {[m',n']^T}\left| {m' =  a} \right.\}$ where $a>0$. The probability (\ref{equ_tudsa_catch}) is then equal to 
%the probability of the random walk hitting the left boundary $\cL_1$ before hitting the right boundary $\cL_4$ by letting $a\rightarrow\infty$. While the result (\ref{equ_tudsa_catch}) is accurate, it's essential to note that the proof supporting this result may lack rigor, as it introduces a boundary that doesn't exist in the problem. In contrast, we've employed a different, more rigorous approach to address the one-dimensional random walk boundary hitting problem in the following theorem.}

%Intuitively speaking, in the asymptotic regime where $L\to\infty$, the boundary line $\cL_2$ in Fig. \ref{Fig_random_walk} essentially does not exist, and hence, the two-dimensional random walk boundary hitting problem illustrated in (\ref{P_m_n_closed_form}) degenerates to a one-dimensional random walk boundary hitting problem. In consequence, the probability that the attacker launches a successful TR-DSA on blockchain is expected to converge to that on a blockchain as $L\to\infty$. This intuitive conjecture is formally summarized and rigorously proved in the following theorem. 

\begin{theorem}\label{Theorem_asymptotic}
If $0\le n<l$, then $Q(l,m,n)$ strictly increases as $l$ increases when $0<I<1$, and is constant when $I=0$. Moreover, when $0\le n<l$, as $l \to \infty$, the probability $Q(l,m,n)$ in (\ref{P_m_n_theorem}) converges to
\begin{equation} \label{P_m_n_asymptotic}
\lim_{l\to\infty}Q(l,m,n)=\left\{{\begin{array}{*{5}{c}} 
\left(\frac{I}{1-I}\right)^{m+1}&,\text{ if }\;0\le I<0.5,\\
1&,\text{ if }\;0.5\le I< 1,
\end{array}}\right.
\end{equation}
which equals the probability $P(m)$ in (\ref{equ_tudsa_catch}).
\end{theorem}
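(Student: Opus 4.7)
The plan is to separate the two claims and handle each one efficiently: the monotonicity follows by inspection of the closed-form sum established in Theorem 1, while the asymptotic identity follows from a standard first-hitting-time argument that ties the two-sided boundary problem to the infinite Gambler's Ruin.

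For the monotonicity, the only non-trivial case is $m \geq 0$, since $Q(l,-1,n) = 1$ holds identically. Reading off (\ref{P_m_n_theorem}), increasing the horizon from $l$ to $l+1$ simply appends one additional term to the finite sum, so
\begin{equation*}
Q(l+1,m,n) - Q(l,m,n) = a_{l-n,m}(1-I)^{l-n} I^{m+1+l-n}.
\end{equation*}
The coefficient $a_{l-n,m}$ is a count of lattice paths, and it is always at least one because the explicit path consisting of $l-n$ steps along $\bdelta^{(2)}$ followed by $m+l-n+1$ steps along $\bdelta^{(1)}$ qualifies. When $0<I<1$ both factors $(1-I)^{l-n}$ and $I^{m+1+l-n}$ are strictly positive, so the difference is strictly positive and $Q(\cdot,m,n)$ is strictly increasing. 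When $I=0$, the factor $I^{m+1+l-n}$ kills the difference and in fact $Q(l,m,n)=0$ for every $l$, which is the stated constancy.

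For the limit, since $Q(l,m,n)$ is non-decreasing in $l$ and bounded above by $1$, the limit exists. Define the first hitting times $T_1 \triangleq \inf\{T \geq 0 : \bs_T \in \cL_1\}$ and $T_2^{(l)} \triangleq \inf\{T \geq 0 : \bs_T \in \cL_2\}$, so that by (\ref{P_m_n_closed_form}) we have $Q(l,m,n) = \Pr(T_1 < T_2^{(l)})$. Each step of the walk changes the $n$-coordinate by at most one, hence $T_2^{(l)} \geq l - n \to \infty$ as $l \to \infty$. The events $\{T_1 < T_2^{(l)}\}$ therefore form an increasing family whose union is $\{T_1 < \infty\}$, and continuity of probability from below yields
\begin{equation*}
\lim_{l\to\infty} Q(l,m,n) = \Pr(T_1 < \infty).
\end{equation*}
The right-hand side is exactly the probability that the attacker's branch eventually surpasses the honest branch when started from a lag of $m$, i.e., the classical infinite Gambler's Ruin probability $P(m)$ recorded in (\ref{equ_tudsa_catch}).

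The only delicate point is the positivity bound $a_{l-n,m} \geq 1$ used in the monotonicity step, but this is transparent from the combinatorial interpretation of $a_{i,m}$ already developed in the proof of Theorem 1. Once that is granted, the strict monotonicity becomes a one-line consequence of (\ref{P_m_n_theorem}), and the limit reduces to a routine continuity-of-probability argument together with the known infinite Gambler's Ruin formula.
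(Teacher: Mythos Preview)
Your monotonicity argument coincides with the paper's: both compute the one-term difference $Q(l+1,m,n)-Q(l,m,n)=a_{l-n,m}(1-I)^{l-n}I^{m+1+l-n}$ and use $a_{l-n,m}\ge 1$. Your justification of the latter via an explicit path is perhaps cleaner than the paper's appeal to the closed form (\ref{a_i_m_theorem}), but the substance is identical.

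For the limit, however, your route differs genuinely from the paper's. The paper proceeds algebraically: it passes to the limit in the two-dimensional recursion (\ref{equ_2D}), observes via (\ref{p_l+1}) that the limit is independent of $n$, obtains a one-dimensional linear recurrence $g(m+1)=I\,g(m)+(1-I)\,g(m+2)$, and then pins down the solution by computing $g(1)=\lim_{l\to\infty}Q(l,0,n)$ explicitly through the generating function of the Catalan numbers. You instead argue probabilistically: the events $\{T_1<T_2^{(l)}\}$ increase to $\{T_1<\infty\}$ because $T_2^{(l)}\ge l-n\to\infty$, and continuity from below gives $\lim_{l\to\infty}Q(l,m,n)=\Pr(T_1<\infty)$, which is the infinite Gambler's Ruin probability $P(m)$ already quoted in (\ref{equ_tudsa_catch}). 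Your argument is shorter and conceptually transparent, and it sidesteps the Catalan generating-function computation entirely; the paper's argument, on the other hand, is more self-contained in that it re-derives the value of $P(m)$ from the closed form of Theorem~\ref{Theorem_DSA} rather than citing it as a known result.
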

\begin{proof}[Proof:\nopunct]
If $0\le n<l$, then from (\ref{P_m_n_theorem}), we have
\begin{align}\notag
Q(l+1,m,n)&=\sum_{i=0}^{l-n}a_{i,m}(1-I)^iI^{m+1+i}\\
&=Q(l,m,n)+a_{{l-n},m}(1-I)^{l-n}I^{m+1+l-n}.
\end{align}
From (\ref{a_i_m_theorem}), we know that $a_{i,m}\ge1$,$\forall i,m\ge0$, and hence, we have $a_{{l-n},m}(1-I)^{l-n}I^{m+1+l-n}>0$ when $0<I<1$ and $a_{{l-n},m}(1-I)^{l-n}I^{m+1+l-n}=0$ when $I=0$. This implies that 
\begin{align}\label{P_increasing}
Q(l+1,m,n)>Q(l,m,n)&, \;\; \text{when }  0<I<1,\\ \label{P_constant}
Q(l+1,m,n)=Q(l,m,n)&, \;\; \text{when } I=0 .
\end{align}
%We define $f(L)\triangleq\bbP_{L}(m,n)$, 
From (\ref{P_increasing}) and (\ref{P_constant}), we know that when $0<I<1$,  $Q(l,m,n)$ is a strictly increasing function of $l$,  and when $I=0$, $Q(l,m,n)$ remains constant as $l$ varies. 
By the definition of $Q(l,m,n)$, we know that $Q(l,m,n)\le1$, $\forall l$. Thus, $Q(l,m,n)$ is strictly increasing and bounded from above when $0<I<1$, and therefore, we know $Q(l,m,n)$ converges as $l$  increases when $0<I<1$ by the monotone convergence theorem \cite{wheeden1977measure}. When $I=0$, $Q(l,m,n)$ doesn't change as $l$ changes, and hence, $Q(l,m,n)$ also converges. 
Therefore, $Q(l,m,n)$ converges as $l$ increases, which implies
%\begin{equation}\label{f_infty}
%\lim_{L\to \infty}\bbP_{L}(m,n)=\lim_{L\to\infty}\bbP_{L+1}(m,n),
%\end{equation}
%which is equivalent to
\begin{equation}\label{p_infty}
\lim_{l\to\infty}Q(l,m,n)=\lim_{l\to\infty}Q(l+1,m,n).\end{equation}
Note that 
\begin{align}\label{p_l+1} \notag
Q&(l+1,m+1,n+1)\\ \notag
&=\sum_{i=0}^{(l+1)-(n+1)-1}a_{i,m+1}(1-I)^iI^{(m+1)+1+i}\\ \notag
&=\sum_{i=0}^{(l)-(n)-1}a_{i,m+1}(1-I)^iI^{(m+1)+1+i}\\
&=Q(l,m+1,n).
\end{align}
From (\ref{p_infty}) and (\ref{p_l+1}), we can get  
\begin{equation}\label{p_n-1}
\lim_{l\to\infty}Q(l,m+1,n+1)=\lim_{l\to\infty}Q(l,m+1,n), \;\; \text{if }n\ge 0.
\end{equation}
From (\ref{equ_2D}), we can get 
\begin{align}\label{equ_2d_infty}\notag
\lim_{l\to\infty}Q(l,m,n)=&\ I\times \lim_{L\to\infty}Q(l,m-1,n)\\
&+(1-I)\times \lim_{l\to\infty}Q(l,m+1,n+1).
\end{align}
Substituting (\ref{p_n-1}) into (\ref{equ_2d_infty}), we have
\begin{align}\label{equ_1d_infty}\notag
\lim_{l\to\infty}Q(l,m,n)=&\ I\times \lim_{l\to\infty}Q(l,m-1,n)\\
&+(1-I)\times \lim_{l\to\infty}Q(l,m+1,n).
\end{align}
For simplicity, let $g(m+1)\triangleq \lim_{l\to\infty}Q(l,m,n)$, $\forall m\ge -1$,  then from (\ref{equ_1d_infty}), we can obtain
\begin{equation}\label{equ_1d_g}
g(m+1)=I\times g(m)+(1-I)\times g(m+2).
\end{equation}
From (\ref{boundary_condition_1}), we have $\lim_{l\to\infty}Q(l,-1,n)=1$ which implies 
\begin{equation}\label{g_0}
g(0)=\lim_{l\to\infty}Q(l,-1,n)=1.
\end{equation}
From (\ref{equ_1d_g}), we have
\begin{equation}\label{equ_1d_grecursive}
g(m+2)-g(m+1)=\frac{I}{1-I}\left[g(m+1)-g(m)\right],
\end{equation}
which yields
 \begin{equation} \label{equ_1d_grecursive1}
g(m+1)-g(m)=\left(\frac{I}{1-I}\right)^m\left[g(1)-g(0)\right].
\end{equation}
Moreover, from the recursive equation in (\ref{equ_1d_grecursive1}), we can obtain
 \begin{align} \label{equ_1d_grecursive2}%\notag
g(m+1)-g(1)=\left[g(1)-g(0)\right]\sum_{i=1}^m\left(\frac{I}{1-I}\right)^m.
%&=(g(1)-g(0))\frac{\frac{I}{1-I}-(\frac{I}{1-I})^{m+1}}{1-\frac{I}{1-I}}
\end{align}
Next, we will determine $g(1)$. From (\ref{P_m_n_theorem}) and (\ref{a_i_m_theorem}), we know that for any $n$,
\begin{align}\label{g_1}\notag
g(1)=&\lim_{l\to\infty}Q(l,0,n)\\ \notag
=&\lim_{l\to\infty}\sum_{i=0}^{l-n-1}C_{i}(1-I)^iI^{1+i} \\ 
= &\sum_{i=0}^{\infty}C_{i}(1-I)^iI^{1+i},
\end{align}
where $C_i$ is the $i$-th Catalan number. 
%Since we already proved that $\bbP_{L}(m,n)$ converges as $L$ increases, we know that 
%\begin{align}\label{g_1_infty}\notag
%g(1)=&\lim_{L\to\infty}\sum_{i=0}^{L-n-1}C_{i}(1-I)^iI^{1+i}\\
%=&\sum_{i=0}^{\infty}C_{i}(1-I)^iI^{1+i}.
%\end{align}
Note that the generating function $c(x)$ for Catalan numbers $\{C_i\}$ is defined as \cite{stanley2015catalan}
\begin{align}\label{generating_catalan}
c(x)\triangleq\sum_{i=0}^{\infty}C_{i}x^{i}=\frac{1-\sqrt{1-4x}}{2x}.
\end{align}
By setting $x=I(1-I)$ for some $I\in (0,1)$ in (\ref{generating_catalan}), we can obtain 
\begin{align}\label{generating_I} \notag
\sum_{i=0}^{\infty}C_{i}\left[I(1-I)\right]^{i}=&\frac{1-\sqrt{1-4I(1-I)}}{2I(1-I)}\\ \notag
=&\frac{1-\sqrt{(1-2I)^2}}{2I(1-I)}\\ 
=&\left\{\begin{array}{cc}\frac{1}{1-I}&,0<I<0.5,\\\frac{1}{I}&,0.5\le I<1.\end{array}\right.
\end{align}
Note that if $I=0$, then we know from (\ref{g_1}) that $g(1)=0$. Moreover, since $I\times\sum_{i=0}^{\infty}C_{i}[I(1-I)]^{i}=\sum_{i=0}^{\infty}C_{i}(1-I)^iI^{1+i}=g(1)$, 
%and combing the fact that if $I=0$, then $g(1)=0$, if $I=1$, then $g(1)=1$, 
we can obtain from (\ref{g_1}) and (\ref{generating_I}) that
\begin{align}\label{equ_g1} 
g(1)=\left\{\begin{array}{cc}\frac{I}{1-I}&,0\le I<0.5,\\1&,0.5\le I< 1.\end{array}\right.
\end{align}
By employing (\ref{g_0}), (\ref{equ_1d_grecursive2}) and (\ref{equ_g1}), we can obtain
\begin{equation}
\lim_{l\to\infty}Q(l,m,n)\!=\!\left\{\begin{array}{*{5}{c}}(\frac{I}{1-I})^{m+1}&,\text{if }0\le  I<0.5,\\1&,\text{if }0.5\le I< 1,\end{array}\right.\!=\!P(m),
\end{equation}
which completes the proof.
\end{proof}
As indicated by Theorem \ref{Theorem_asymptotic}, $Q(l,m,n)$ is a non-decreasing function of $l$, and it converges to the probability $P(m)$ as $l$ increases. This indicates that $P(m)$ serves as an upper bound for $Q(l,m,n)$. Additionally, it is seen from Theorem \ref{Theorem_asymptotic} that for a finite $L$, $Q(L,m,n)\le\lim_{l\to\infty}Q(l,m,n)$ with equality holding only when $I=0$. This implies that $Q(L,m,n)$ remains strictly smaller than 1 when $0\le I<1$ and $L<\infty$. 
Moreover, by employing the results of Theorem \ref{Theorem_asymptotic}, we can compare the success probability of a TR-DSA and that of a TU-DSA, which is formally summarized in the following corollary.
\begin{corollary}\label{Corollary_1}
The success probability of a TR-DSA strictly increases as $L$ increases when $0<I<1$. In addition, for any $L$, the success probability of a TR-DSA is no greater than the success probability of a TU-DSA. Particularly, the success probability of a TR-DSA is strictly smaller than the success probability of a TU-DSA when $0< I<1$. Moreover, the success probability of a TR-DSA is strictly smaller than 1 when $0\le I<1$.
\end{corollary}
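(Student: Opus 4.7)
The plan is to reduce every claim in the corollary to an appropriate statement about the probability $Q(L,m,n)$ by exploiting the representation
\begin{equation*}
P_s^{(TR)} = 1-\sum_{k=0}^{Z+1}\Pr\{b=k|\mathcal{E}\}\bigl[1-Q(L,Z+1-k,0)\bigr],
\end{equation*}
obtained by combining (\ref{equ_trdsa_1}) with the identity $\Pr\{b>Z+1|\mathcal{E}\}+\sum_{k=0}^{Z+1}\Pr\{b=k|\mathcal{E}\}=1$, together with the analogous representation for $P_s^{(TU)}$ in which $Q(L,Z+1-k,0)$ is replaced by $P(Z+1-k)$. Observe that the relevant index $m=Z+1-k$ always lies in $\{0,1,\ldots,Z+1\}$, so $m\ge 0$ and Theorem \ref{Theorem_asymptotic} is applicable to every summand.

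For the first claim, I would fix $0<I<1$ and note that every coefficient $\Pr\{b=k|\mathcal{E}\}=\binom{k+Z}{k}I^k(1-I)^{Z+1}$ is strictly positive. Theorem \ref{Theorem_asymptotic} then guarantees that each term $Q(L,Z+1-k,0)$ is strictly increasing in $L$, so the weighted sum representing $P_s^{(TR)}$ is strictly increasing in $L$ as well.

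For the comparison with the TU-DSA, the pointwise inequality $Q(L,m,0)\le P(m)$ for every finite $L$ follows at once from Theorem \ref{Theorem_asymptotic}, since $P(m)=\lim_{l\to\infty}Q(l,m,0)$ and $Q(l,m,0)$ is non-decreasing in $l$. Termwise comparison of the two representations then yields $P_s^{(TR)}\le P_s^{(TU)}$. When $0<I<1$, the strict monotonicity in $l$ furnishes $Q(L,m,0)<Q(L+1,m,0)\le P(m)$, and together with the strict positivity of each $\Pr\{b=k|\mathcal{E}\}$ this upgrades the inequality to a strict one.

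Finally, for $P_s^{(TR)}<1$ when $0\le I<1$, I would exhibit one index $k$ for which both $\Pr\{b=k|\mathcal{E}\}>0$ and $Q(L,Z+1-k,0)<1$. Taking $k=0$, the coefficient $(1-I)^{Z+1}$ is strictly positive for $0\le I<1$. When $I=0$ the closed form in Theorem \ref{Theorem_DSA} gives $Q(L,Z+1,0)=0$, and when $0<I<0.5$ Theorem \ref{Theorem_asymptotic} gives $Q(L,Z+1,0)\le P(Z+1)=(I/(1-I))^{Z+2}<1$. The mildly delicate case is $0.5\le I<1$, where the limit $P(Z+1)=1$ and so a bound in terms of $P(Z+1)$ is no longer sharp enough; here I rely on the strict monotonicity supplied by Theorem \ref{Theorem_asymptotic} to conclude $Q(L,Z+1,0)<\lim_{l\to\infty}Q(l,Z+1,0)=1$. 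This last step is the only substantive obstacle in the argument; once it is in hand, the remaining claims are routine bookkeeping on top of Theorem \ref{Theorem_asymptotic}.
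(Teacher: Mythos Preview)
Your proposal is correct and follows essentially the same approach as the paper: both proofs reduce the corollary to the monotonicity and limit statements about $Q(l,m,n)$ supplied by Theorem~\ref{Theorem_asymptotic}, applied termwise through the decomposition of $P_s^{(TR)}$ in (\ref{equ_trdsa_1}). The only minor difference is that for the final claim $P_s^{(TR)}<1$, the paper argues directly that $Q(L,m,n)<1$ for all relevant $m$ when $0\le I<1$ (a fact it records just before the corollary), whereas you single out the index $k=0$ and handle the subcases $I=0$, $0<I<0.5$, and $0.5\le I<1$ separately; both routes are valid and rest on the same ingredients.
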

\begin{proof}
It is seen from (\ref{equ_trdsa_1}) that the success probability of a TR-DSA increases as $Q(L,Z+1-k,0)$ increases. Since we know from Theorem \ref{Theorem_asymptotic} that $Q(l,m,n)$ strictly increases as $l$ increases when $0<I<1$, we can conclude that the success probability of a TR-DSA strictly increases as $L$ increases when $0<I<1$.

From  Theorem \ref{Theorem_asymptotic}, we know that for a finite $L$, $Q(L,m,n)\le \lim_{l\to\infty}Q(l,m,n)=P(m)$. From (\ref{equ_trdsa_1}), (\ref{equ_tudsa}), and Theorem \ref{Theorem_asymptotic}, we can obtain,
\begin{align}  \notag
P_s^{(TR)}=&1-\sum_{k=0}^{Z+1}\Pr\{b=k|\mathcal{E}\}\\ \label{pr_trdsa}
&+\sum_{k=0}^{Z+1}\Pr\{b=k|\mathcal{E}\}Q(L,Z+1-k,0)\\ \notag
\le&1-\sum_{k=0}^{Z+1}\Pr\{b=k|\mathcal{E}\}  \\ 
&+\sum_{k=0}^{Z+1}\Pr\{b=k|\mathcal{E}\}\lim_{l\to\infty}Q(l,Z+1-k,0)\label{pr_tudsa}\\ \notag
=&1-\sum_{k=0}^{Z+1}\Pr\{b=k|\mathcal{E}\}\\ \notag
&+\sum_{k=0}^{Z+1}\Pr\{b=k|\mathcal{E}\}P(Z+1-k)\\  \notag
=&P_s^{(TU)},
\end{align}
where the inequality in (\ref{pr_tudsa}) becomes equality only when $I=0$. Moreover, we know from (\ref{pr_trdsa}) that only when $Q(L,Z+1-k,0)=1,\forall k\in\{0,1,...,Z+1\}$, the success probability of a TR-DSA is one. But it is seen from  Theorem \ref{Theorem_asymptotic} that $Q(L,m,n)$ is strictly smaller than one when $0\le I<1$ and $L<\infty$. This implies that the success probability of a TR-DSA is strictly smaller than one when $0\le I<1$. The proof is now completed.
\end{proof}
From (\ref{equ_tudsa_catch}) and (\ref{equ_tudsa}), we know that the success probability of a TU-DSA is one when $0.5\le I<1$, which implies that if an attacker gains the majority of the hash rate of a blockchain network and can indefinitely mine its branch, it has the capability to successfully falsify any transactions stored in the main chain of the blockchain. However, Corollary \ref{Corollary_1} reveals that the success probability of a TR-DSA is strictly smaller than one even when $0.5\le I<1$. This means that  for an attacker with limited computational resources, even if it amasses the majority of the network's hash rate, the risk of failure in launching a TR-DSA persists. Additionally, since the success probability of a TR-DSA isn't higher than the success probability of a TU-DSA, it underscores that blockchain applications with finite timeframes for task completion are inherently less vulnerable to double-spending attacks than those allowing unlimited timeframes.

\section{Numerical Results}
\label{Section_simulation}

In this section, we first numerically study the probability $Q(l,m,n)$ that an attacker's branch successfully catches up with and surpasses the honest branch for different parameters $m$, $l$, $n$, and $I$, respectively.  In particular, we employ Monte Carlo simulations to corroborate the theories developed in this paper. The number of Monte Carlo runs is $10^4$ in all simulation experiments, and the Monte Carlo simulation results are specified by the legend label ``Simulation'' in the figures. In addition, we numerically compare the success probability of a TR-DSA with that of a TU-DSA.
\begin{figure}[H]\centering   
	\includegraphics[width=0.5\textwidth]{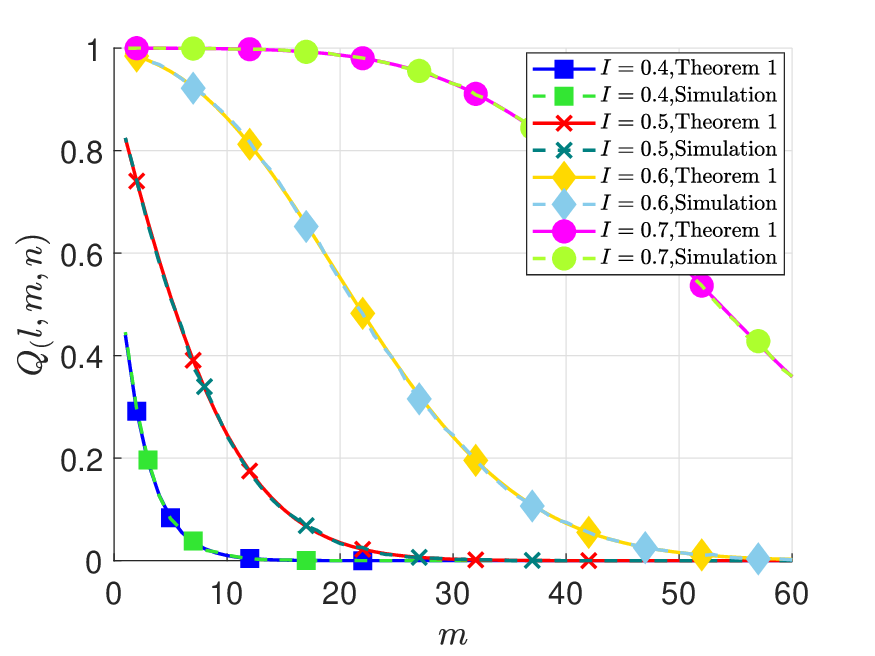}    
	\caption{The relationship between $Q(l,m,n)$ and $m$ for different $I$. }
	\label{fig_PI}
\end{figure}

\begin{figure}[H]\centering   
	\includegraphics[width=0.5\textwidth]{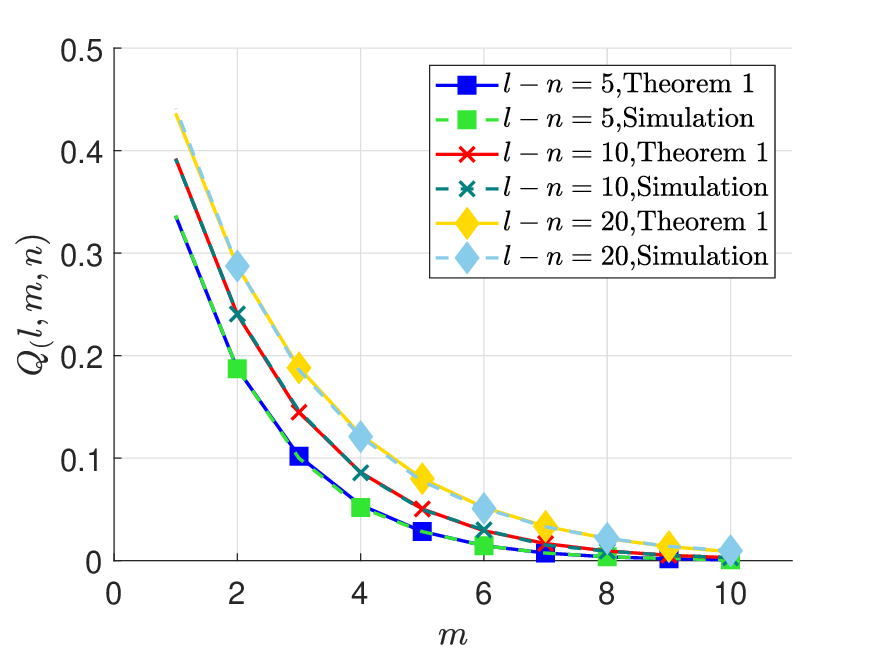}    
	\caption{The relationship between $Q(l,m,n)$ and $m$ for different $(l-n)$.}
	\label{fig_PL}
\end{figure}

As $m$ varies from $1$ to $60$, Fig. \ref{fig_PI} depicts $Q(l,m,n)$ for different $I$, where $n=0$ and $l=40$. The numerical results yielded from Theorem \ref{Theorem_DSA} and Monte Carlo simulations are specified by solid and dashed curves in Fig. \ref{fig_PI}, respectively, which clearly agree with each other. Hence, the numerical results in Fig. \ref{fig_PI} corroborate Theorem \ref{Theorem_DSA}.  It is seen from Fig. \ref{fig_PI} that $Q(l,m,n)$ increases as $m$ decreases, and moreover, for a given $m$, $Q(l,m,n)$ increases as $I$ increases.  This is because when $m$ decreases, the gap between the number of blocks in the attacker's branch and that in the honest branch shrinks, 
%the number of blocks that the attacker needs to mine to extend the counterfeit branch to surpass the authentic branch decreases, 
and therefore, it is easier for the attacker's branch to surpass the honest branch and hence win the competition. Moreover, when $I$ increases, the probability that the next block mined in the blockchain is generated by the attacker increases, and hence, it is also easier for the attacker's branch to win the competition.
\begin{figure}[H]\centering   
	\includegraphics[width=0.5\textwidth]{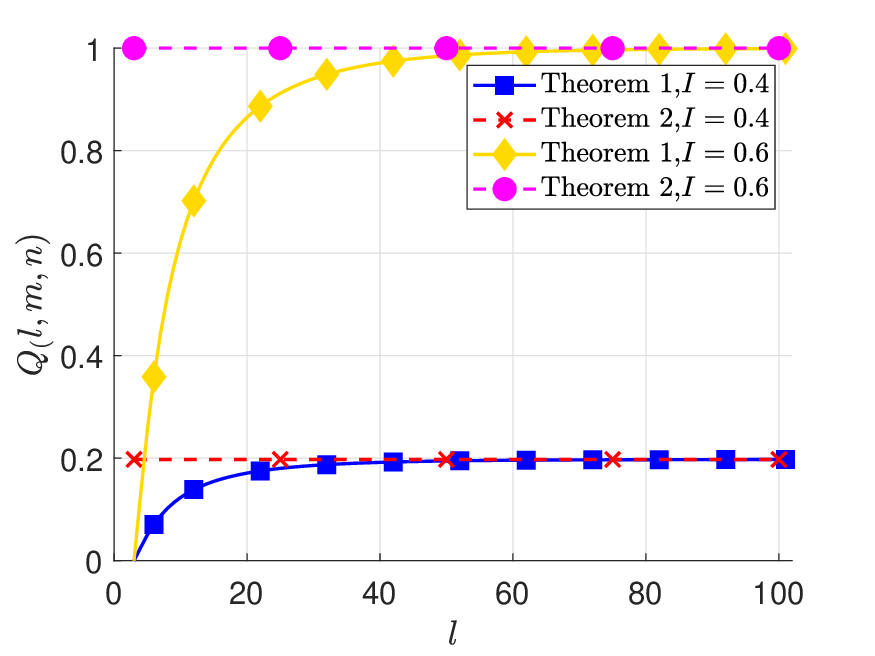}    
	\caption{The comparison between $Q(l,m,n)$  obtained from Theorem \ref{Theorem_DSA} and Theorem \ref{Theorem_asymptotic}.}
	\label{fig_vs}
\end{figure}
Note that from (\ref{P_m_n_theorem}), we know that $Q(l,m,n)$ varies with $(l-n)$ which is the number of remaining blocks that the honest miners have to generate to win the competition. Next, we numerically investigate how the value of $(l-n)$ impacts $Q(l,m,n)$.  As $m$ varies from $1$ to $10$, Fig. \ref{fig_PL} depicts $Q(l,m,n)$ for different $(l-n)$, where $I=0.4$. It is seen from Fig. \ref{fig_PL} that the numerical results yielded from Theorem 1 agree with those from Monte Carlo simulations. Moreover, for a given $m$, $Q(l,m,n)$ increases as $(l-n)$ increases. This can be explained by the fact that when $(l-n)$ increases, it takes more time for the honest branch to grow by the designated number of blocks, and hence, the attacker's branch has a better chance to win the competition.

%\begin{figure}[H]\centering   
%	\includegraphics[width=0.5\textwidth]{PN_ccs.eps}    
%	\caption{The relationship between $\bbP_L(m,n)$ and $m$ for different $n$.}
%	\label{fig_PN}
%\end{figure}

 %Fig. \ref{fig_PN} depicts $\bbP_L(m,n)$ for different $n$ as $m$ varies from $1$ to $70$, where $L=30$ and $I=0.4$. Again, the numerical results yielded from Theorem \ref{Theorem_DSA} exactly match the Monte Carlo simulation results. It is seen from Fig. \ref{fig_PN} that for a given $m$, $\bbP_L(m,n)$ increases as $n$ decreases. Similar to $L$, this can be explained by the fact that when $n$ decreases, it takes more time for the honest miners to extend the honest branch to grow $(L-n)$ more blocks, and hence, the attacker has a better chance to extend its branch to be longer than the honest branch before the honest branch grows $(L-n)$ more blocks. 

 We further numerically corroborate Theorem \ref{Theorem_asymptotic}. As $l$ increases, Fig. \ref{fig_vs} depicts $Q(l,m,n)$ obtained from (\ref{P_m_n_theorem}) and (\ref{P_m_n_asymptotic}) for different $I$, where $m=3$ and $n=0$. The blue and red curves  are obtained from (\ref{P_m_n_theorem}) and (\ref{P_m_n_asymptotic}), respectively, when $I=0.4$. The yellow and magenta curves  are obtained from (\ref{P_m_n_theorem}) and (\ref{P_m_n_asymptotic}), respectively, when $I=0.6$. It is seen from Fig. \ref{fig_vs} that the curves obtained from (\ref{P_m_n_theorem}) converge to the corresponding curves obtained from (\ref{P_m_n_asymptotic}) with the same $I$ as $l$ increases. Moreover, the curves obtained from (\ref{P_m_n_asymptotic}) are always above the corresponding curves obtained from (\ref{P_m_n_theorem}), which agrees with Theorem \ref{Theorem_asymptotic}. 

Next, we compare the success probability of a TR-DSA with that of a TU-DSA. As $I$ increases, Fig. \ref{fig_vsl} depicts the success probability of a TR-DSA, $P_s^{(TR)}$, and the success probability of a TU-DSA, $P_s^{(TU)}$, for different $L$, where $Z=4$. The blue dashed curve illustrates the success probability of a TU-DSA $P_s^{(TU)}$. The green curve marked with squares, the purple curve marked with `x's, the red curve marked with diamonds, the yellow curve marked with circles, and the teal curve marked with triangles illustrate the success probabilities of a TR-DSA, $P_s^{(TR)}$, when $L$ equals 1, 2, 10, 20, and 50, respectively. It is seen that $P_s^{(TR)}$ increases as $L$ increases when $0<I<1$, and $P_s^{(TR)}$ is no greater than $P_s^{(TU)}$, which agrees with Corollary \ref{Corollary_1}. In addition, it is seen from Fig. \ref{fig_vsl} that $P_s^{(TU)}$ is one if $I\ge0.5$. However, $P_s^{(TR)}$ is smaller than one when $0.5\le I<1$, which also agrees with Corollary \ref{Corollary_1}. 

\begin{figure}[H]\centering   
	\includegraphics[width=0.5\textwidth]{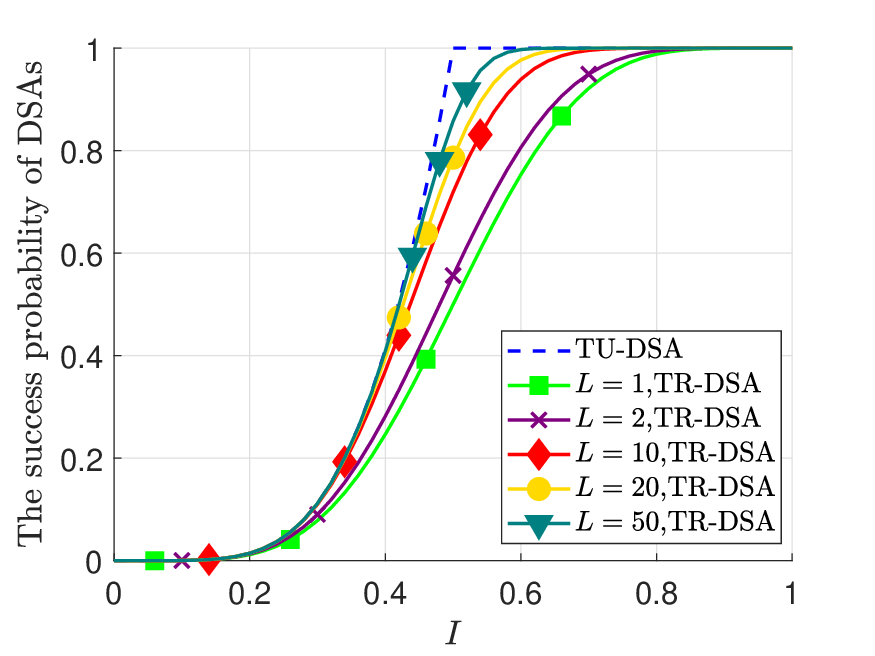}    
	\caption{The comparison between $P_s^{(TR)}$ and $P_s^{(TU)}$  for different $L$.}
	\label{fig_vsl}
\end{figure}
\begin{figure}[H]\centering   
	\includegraphics[width=0.5\textwidth]{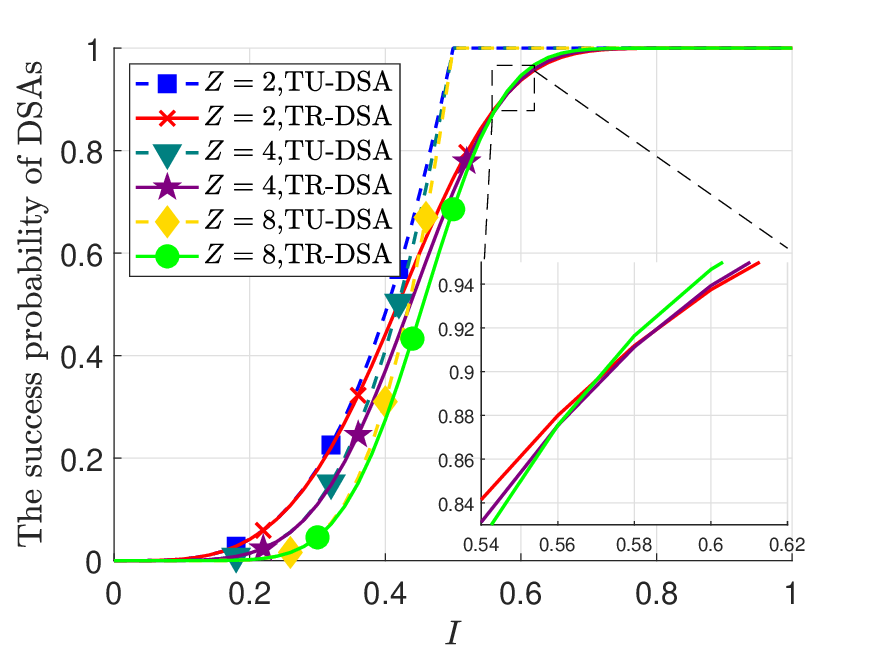}    
	\caption{The comparison between $P_s^{(TR)}$ and $P_s^{(TU)}$ for different $Z$.}
	\label{fig_vsz}
\end{figure}

In the end, we numerically study the impact of $Z$ on $P_s^{(TR)}$ and  $P_s^{(TU)}$.
As $I$ increases, Fig. \ref{fig_vsz} depicts the success probability of a TR-DSA, $P_s^{(TR)}$, and the success probability of a TU-DSA, $P_s^{(TU)}$, for different $Z$, where $L=10$. The blue curve marked with squares, the teal curve marked with triangles, and the yellow curve marked with diamonds illustrate $P_s^{(TU)}$ when the number $Z$ of blocks required for transaction confirmation equals 2, 4, and 8, respectively. The red curve marked with 'x's, the purple curve marked with pentagrams, and the green curve marked with circles illustrate the success probability of a TR-DSA, $P_s^{(TR)}$, when $Z$ equals 2, 4, and 8, respectively. First, it is seen from Fig. \ref{fig_vsz} that, when $I$ is small (i.e., $ I < 0.58$ for $P_s^{(TR)}$ and $I<0.5$ for $P_s^{(TU)}$  in Fig. \ref{fig_vsz}), both $P_s^{(TR)}$ and $P_s^{(TU)}$ decrease as $Z$ increases. This trend aligns with the conventional understanding that a larger number of blocks required for transaction confirmation $Z$ can effectively reduce the risk of a DSA. However, an interesting result is observed when $I$ is large. Specifically, $P_s^{(TR)}$ actually increases as $Z$ increases in Fig. \ref{fig_vsz} when $I$ is greater than 0.58. For example, when $I=0.6$, it is seen from Fig. \ref{fig_vsz1} that  $P_s^{(TR)}$ strictly increases as $Z$ increases. This can be explained by the fact that when $I$ is large, the attacker has a higher probability of mining the next block compared to the honest miners. Therefore, with a larger $Z$, the attacker has more time to secretly mine more blocks for its branch while waiting for transaction confirmation. Consequently, the chance of the attacker's branch surpassing the honest branch during the transaction confirmation period increases. This finding implies that if, in a blockchain network, an attacker gains a majority of the network's hash rate, we can reduce the number of blocks required for transaction confirmation to mitigate the risk of a TR-DSA.

\begin{figure}[H]\centering   
	\includegraphics[width=0.5\textwidth]{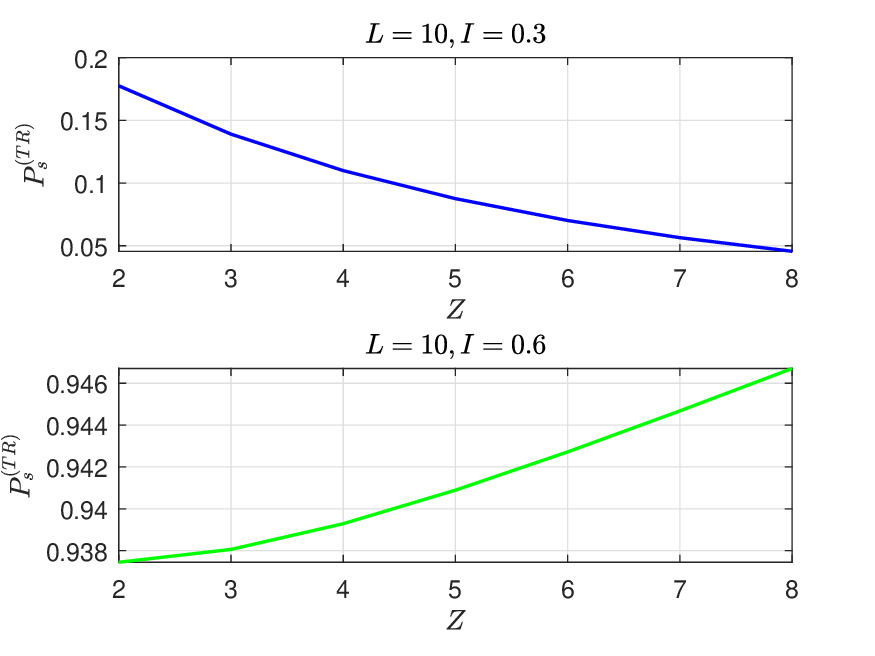}    
	\caption{The relationship between $P_s^{(TR)}$ and $Z$ for different $I$.}
	\label{fig_vsz1}
\end{figure}

\section{Conclusions}\label{Section_conclusion}
In this paper, we have theoretically studied a TR-DSA on a PoW-based blockchain where the attacker mines its own branch within a finite timeframe. This TR-DSA model is well suited for modeling double-spending attacks on blockchain applications where the tasks need to be accomplished within finite timeframes. This TR-DSA model is also appropriate for describing DSAs for the scenarios when a practical attacker does not have sufficient computational resources to conduct a DSA indefinitely.
%, where a finitely-long blockchain is employed to secure the information exchanges over the network and the data storage at the nodes of the network. 
We have developed a general closed-form expression for the success probability of a TR-DSA. The success probability of a TR-DSA can aid blockchain applications where the tasks need to be accomplished within finite timeframes in evaluating their risks of double-spending attacks, and can be leveraged by a practical attacker with limited computational resources to make an informed decision about whether to proceed with a double-spending attack. Furthermore, we have proven that blockchain applications with timely tasks are less vulnerable to double-spending attacks than blockchain applications which provide attackers with an unlimited timeframe for their attacks. We also have shown that there is still a risk of failure in launching a TR-DSA even if an attacker with limited computational resources amasses a majority of the hash rate in the network.
% that by employing a finitely-long blockchain, the difficulty of a successful double-spending attack increases. Moreover, unlike infinitely-long blockchains, even though the normalized hash rate of an attacker is greater than $50\%$, the probability of success in launching a DSA on any finitely-long blockchain is strictly less than one. This indicates that 51\% attacks cannot completely demolish the security of finitely-long blockchains. 

%This probability is valuable not only for blockchain applications that rely on timely task completion but also for attackers with limited computational resources. It enables these applications to reevaluate their risk of double-spending attacks and fine-tune blockchain parameters to mitigate the risk of DSAs. Additionally, it empowers attackers to make informed decisions about launching a DSA based on their computational capabilities and expected rewards.

\appendices

%\clearpage

% you can choose not to have a title for an appendix
% if you want by leaving the argument blank

% use section* for acknowledgment
%\section*{Acknowledgment}
%
%
%The authors would like to thank...

% Can use something like this to put references on a page
% by themselves when using endfloat and the captionsoff option.
\ifCLASSOPTIONcaptionsoff
  \newpage
\fi
\bibliographystyle{IEEEtran}
\bibliography{IEEEabrv,Blockchain}

\end{document}